\def\eabs{\epsilon_{\rm abs}}
\def\erel{\epsilon_{\rm rel}}
\def\cH{\mathcal{H}}
\def\cX{\mathcal{X}}
\def\cY{\mathcal{Y}}
\def\cZ{\mathcal{Z}}
\def\Tr{\mathrm{Tr}}
\def\Pr{\mathrm{Pr}}
\def\poly{\mathrm{poly}}
\newtheorem{definition}{Definition}
\newtheorem{theorem}{Theorem}
\newtheorem{lemma}{Lemma}
\newtheorem{corollary}{Corollary}
\def\one{{\mathchoice {\rm 1\mskip-4mu l} {\rm 1\mskip-4mu l} {\rm
1\mskip-4.5mu l} {\rm 1\mskip-5mu l}}}
\def\dqc1{\textsc{DQC1}}
\newcommand{\sket}[1]{| #1\rangle}  
\newcommand{\sbra}[1]{\langle #1|}  
\renewcommand{\Re}{\mathrm{Re}}
\algnewcommand\algorithmicto{\textbf{to}}
\begin{document}
    \title{Computing partition functions in the one clean qubit model}

 \author{Anirban N. Chowdhury}
\affiliation{D\'epartement de Physique \& Institut Quantique, Universit\'e de Sherbrooke, QC J1K 2R1, Canada}

\author{
Rolando D. Somma}
\affiliation{Theoretical Division, Los Alamos National Laboratory, Los Alamos, NM 87545, USA}

\author{Yi\u{g}it Suba\c{s}\i}
\affiliation{Computer, Computational, and Statistical Sciences Division, Los Alamos National Laboratory, Los Alamos, NM 87545, USA}

\date{\today}
    
\begin{abstract}
We present a method to approximate partition functions
of quantum systems using mixed-state quantum computation. For positive semi-definite Hamiltonians,
our method has expected running-time that is almost linear in $(M/(\erel \cZ ))^2$,
where $M$ is the dimension of the quantum system, $\cZ$ is the  partition function,
and $\erel$ is the relative precision. It is based on approximations of
the exponential operator as linear combinations of certain operators related to 
block-encoding of Hamiltonians or Hamiltonian evolutions.
The trace of each operator is estimated using a standard algorithm
in the one clean qubit model. For large values of
$\cZ$, our method may run faster than exact classical methods, 
whose complexities are polynomial in $M$. We also prove
that a version of the partition function estimation problem within
additive error is complete for the so-called DQC1 complexity class, suggesting that our method
provides a super-polynomial speedup for certain parameter values.
To attain a desired relative precision, we develop
a classical procedure based on a sequence of approximations
within predetermined additive errors that may be of independent interest.
\end{abstract}

\maketitle
    
\section{Introduction}
    
One of the most important quantities used to describe a physical system in thermodynamic equilibrium is the partition function $\cZ$. Many thermodynamic properties, such as the free energy or entropy, can then be derived from $\cZ$ using simple mathematical relations~\cite{Pat72}. Partition functions also appear naturally in many other problems in mathematics and computer science, such as counting the solutions of constraint satisfaction problems~\cite{BG05}. Therefore, developing novel algorithms for partition functions is of great importance~\cite{Bra08,PW09,NDR+09,CS16,HMS19}.

In this paper, we present a method
for approximating partition functions using
the one clean qubit model of computation.
In this model, one qubit is initialized in a pure state,
in addition to $n$ qubits in the maximally mixed state~\cite{KL98}.
This is in contrast to standard quantum computation,
where the many-qubit initial state is pure~\cite{NC01}.
The one clean qubit model has attracted significant attention as it appears
that some problems can be solved efficiently within this model for which no efficient classical algorithms are known to exist~\cite{PBL+04,SJ08,CM18}. Additionally, this model is practically relevant for mixed-state quantum computation, e.g., it is suitable to describe liquid-state NMR~\cite{LKC+02,NSO+05}, and for quantum metrology~\cite{BS08}. We demonstrate further advantages of the one clean qubit model by describing a method to estimate partition functions of quantum systems.

Our main result is a method that outputs an estimate $\hat \cZ$ of $\cZ$ within given relative precision $\erel >0$ and with high success probability $(1-\delta)<1$. 
To achieve this goal we first give algorithms in the one clean qubit model that can estimate partition functions of certain $m$-qubit systems within a desired additive error.
The basic idea behind these algorithms is the fact that $\cZ$
can be approximated from linear combinations of the traces of 
certain unitary operators. We present two approaches: one in which the unitaries are constructed from a block-encoding of the Hamiltonian and another in which they are constructed from Hamiltonian evolutions.\footnote{Whether one approach is more suitable than the other will depend on the specification of the Hamiltonian of the system. See Sections~\ref{sec:LCU}, \ref{sec:Uimp}, and \ref{sec:complexity} for details.} The trace of each such unitary can be estimated by repeated uses of the well-known trace-estimation algorithm of Fig.~\ref{fig:DQC1_Alg}, which allows us to compute $\mathcal{Z}$ with additive error. Finally, we obtain an estimate of $\cZ$ within a desired relative error by iterating multiple additive-error estimations.

For positive semi-definite Hamiltonians, our algorithm for obtaining relative-error estimates of the partition function has expected running-time that is almost linear in $(M/(\erel \cZ ))^2$, where $M=2^m$.
In contrast, the runtime of a classical method based on exact diagonalization scales as $M^3$ and the runtime of the kernel polynomial method scales as $M$ \cite{weisse2006kernel}. 
Our method can thus provide a significant (super-polynomial) speedup in cases where $\cZ$, or the temperature, is sufficiently large. However, the improvement is less pronounced or may be lost when $\cZ$ is small, which corresponds to the low-temperature regime. In this regime, estimating the partition function is known to be computationally difficult~\cite{Sly10,HMS19}. Further, numerous complexity theoretic results effectively rule out the possibility of having efficient general-purpose algorithms for estimating the partition function. Exactly computing partition functions of classical systems, e.g., the Ising model is \#P-hard~\cite{JS93}, and even approximating them to a multiplicative error can be NP-hard; see Ref.~\cite{MQ13} for some results. While efficient polynomial-time approximation algorithms exist for high-temperature partition functions, these algorithms have not been shown to work at low temperatures, i.e., temperatures below the critical point~\cite{HMS19}. Our method as well as known quantum algorithms such as Ref.~\cite{PW09} can work at any temperature, even though the complexities may not always be favorable.

The complexity of additive-error estimates of the partition function has also received attention, notably in the context of quantum computation. It is known that for certain lattice models this problem can be either BQP-complete or DQC1-complete, but only in complex parameter regimes (complex temperatures) that do not correspond to physical scenarios~\cite{DDVM11}. In a physically relevant setting (real temperature), Brand\~{a}o showed that it is DQC1-hard to estimate the normalized partition function of certain $\log m$-local Hamiltonians within $1/\poly(m)$ additive error at temperature that is $\Omega(1/\poly(m))$~\cite{Bra08}. Our results imply that this problem is in fact DQC1-complete. This suggests that in some cases our algorithm provides a super-polynomial speed-up.

 \begin{figure}[htb]
    \includegraphics[width=5.5cm]{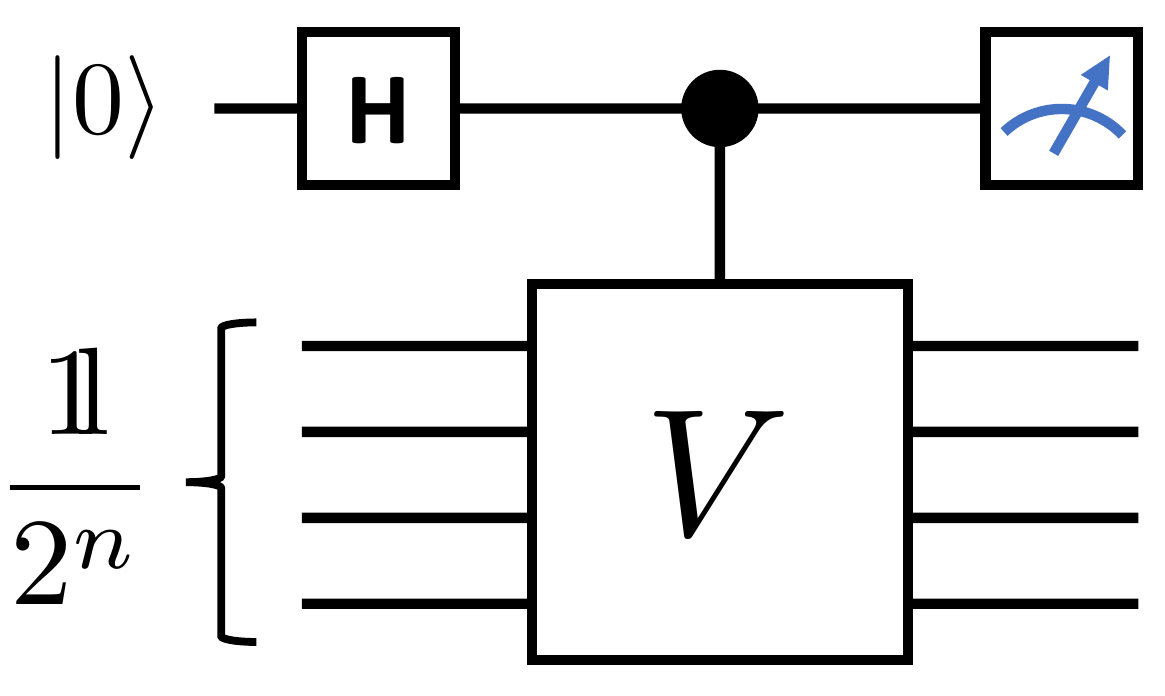}
     \caption{The trace-estimation algorithm to compute
     the renormalized trace of a unitary operator $V$.
     The clean (ancilla) qubit is initialized in the state $\ket 0$
     and then acted on with the Hadamard gate ${\rm H}$. The remaining $n$
     qubits are initialized in the completely mixed state $\one/2^n$.
     The filled circle denotes that $V$ is applied conditional on
     the state of the ancilla being $\ket 1$. Repeated projective measurements of the ancilla-qubit Pauli
     operator $\sigma_x$, resulting in $\pm 1$ outcomes, provide an estimate of the expectation
     $\langle \sigma_x \rangle=\Re\Tr[V]/2^n$.}
    \label{fig:DQC1_Alg}
\end{figure}

The rest of the paper is organized as follows.
In Sec.~\ref{sec:PFP} we state our assumptions and introduce the partition function problem
where the goal is to estimate $\cZ$ within given relative error.
In Sec.~\ref{sec:DQC1} we describe the one clean qubit model
and introduce the DQC1 complexity class. In Sec.~\ref{sec:LCU}
we provide two approximations to the exponential operator as specific linear combinations of unitaries, 
which will be used by our method. 
The unitaries in the approximation are related to block-encoding of Hamiltonians or Hamiltonian simulation, and
we describe the implementations of these in Sec.~\ref{sec:Uimp}. 
In Sec.~\ref{sec:alg}
we provide our main algorithms. In Sec.~\ref{sec:CC} 
we demonstrate the correctness of our method and in Sec.~\ref{sec:complexity} we 
establish its complexity. 
In Sec.~\ref{sec:complete} we show that a version of the partition function problem is DQC1-complete 
and we conclude in Sec.~\ref{sec:conc}.

We give some technical proofs in Appendices~\ref{app:prooflemma1},~\ref{app:chebyapprox}, and~\ref{app:unitapprox}. In Appendix~\ref{appx:est_rel_err} we develop a classical
procedure to estimate quantities with a given relative error and success probability, from estimations with suitable additive errors and success probabilities. This procedure is formulated under fairly general assumptions and it can be applied to a wide range of problems beyond the one considered in this paper.
Finally, in Appendix~\ref{app:rel_err_complexity} we bound the complexity of our algorithms for obtaining relative estimates to the partition function.

\section{Problem statement}
\label{sec:PFP}
        
We consider a discrete, $M$-dimensional quantum system with Hamiltonian $H$.
In the canonical ensemble, the partition function is 
\begin{align}
\label{eq:PF}
\mathcal{Z} := \text{Tr}\left(e^{-\beta H} \right) \;,
\end{align}
where $\beta \ge 0$ is the inverse temperature. That is, $\beta=1/k_{\rm B}T$, with $k_{\rm B}$ the Boltzmann constant and $T$ the temperature. For simplicity, we will focus on systems composed of $m$ qubits,
where $m=\log_2(M)$. Nevertheless, if one is interested in partition functions of quantum systems obeying
different particle statistics, such as bosonic or fermionic systems, the results in Refs.~\cite{SOGKL02,SOKG03} may be used to represent the corresponding operators in terms
of Pauli operators acting on qubits. The techniques developed here can then be used to study such systems.
Formally, we define the partition function problem (PFP) as follows:
\begin{definition}[PFP]
Given a Hamiltonian $H$, an inverse temperature $\beta\ge 0$,
a relative precision parameter $\erel>0$, and a probability of error $\delta>0$,
the goal is to output a positive number $\hat \cZ$ such that
\begin{align}
    \label{eq:PFP}
    \left |\hat \cZ  - {\cZ} \right | \le \erel {\cZ} \; ,
\end{align}
with probability at least $(1-\delta)$.
\end{definition}
The reason why we focus on relative approximations of the partition function is because they translate
to additive approximations for the estimation of {extensive} thermodynamic quantities such as entropy and free energy. For example, the free energy in thermodynamic equilibrium is given by $F=-(1/\beta) \log Z$. Using the estimate $\hat \cZ$ to obtain an estimate $\hat F$, we obtain $|F-\hat F|=O(\erel/\beta)$.
We will also consider additive approximations of $\cZ$ in our discussion\,---\,this is in fact the partition function problem studied in Ref.~\cite{Bra08}, for which a quantum algorithm in the circuit model is given. We show in Sec.~\ref{sec:alg} that the two problems are related.

Our main goal is to provide an algorithm that uses the one clean qubit model to solve the PFP. We will focus on Hamiltonians that have the form $H=\sum_{l=1}^L\alpha_l H_l$, $\alpha_l>0$,  $L=O(\text{poly}(m))$, and where each $H_l$ is either a unitary operator or a projector. We require that there exist efficient quantum circuits to implement either each $H_l$ (when it is unitary) or a unitary related to each $H_l$ (when it is a projector), as explained in Sec.~\ref{sec:Uimp}. Defining $\alpha=\sum_{l=1}^L\alpha_l$, we work with the renormalized Hamiltonian $H\gets H/\alpha$ and rescaled inverse temperature $\beta \gets \beta\alpha$ in order to simplify notation. The complexities of our algorithms depend implicitly on $\alpha$ through their dependence on the inverse temperature $\beta$.

\section{The one clean qubit model}
\label{sec:DQC1}
In the one clean qubit model, the initial state (density matrix) of a system of $n+1$ qubits is
\begin{align}
    \rho_i = \ketbra 0 \otimes \frac \one {2^n} \;,
\end{align}
where $\one$ is the identity operator over $n$ qubits. We write $\cH_n \equiv \mathbb C^{2^n}$ for the Hilbert space
associated with $n$-qubit quantum states. A quantum circuit
$U=U_{T-1} \ldots U_0$ is then applied to $\rho_i$, where
each $U_j$ is a two-qubit quantum gate, and a projective measurement
is performed
on the ancilla at the end. The outcome probabilities are $p_0$ and $p_1=1-p_0$,
where
\begin{align}
    p_0 = \Tr [(U \rho_i U^\dagger) (\ketbra 0 \otimes \one)] \; .
\end{align}

The complexity class DQC1 consists of decision problems
that can be solved within the one clean qubit model in polynomial time (in the problem size $s$)
with correctness probability $\ge 2/3$. We are allowed to act on $\rho_i$ with quantum circuits of length $\text{poly}(s)$, measure the ancilla, and repeat this $\poly(s)$ many times. In our definition, DQC1 contains
the class BPP, that is, the class of problems that can be solved in time $\text{poly}(s)$ using a classical computer (probabilistic Turing machine).

Remarkably, it can be shown that the problem of estimating $\Re \Tr[V]/2^n$ within additive error $\Omega(1/\text{poly}(n))$, where
$V$ is a quantum circuit of length $O(\text{poly}(n))$ acting on $\cH_n$,
is complete for the DQC1 class~\cite{KL98}. That is, any other problem in DQC1 
can be reduced to trace estimation. While this
is not a decision problem, it can be transformed to one by simple
manipulations~\cite{Shep06}. In this paper, however, we will mainly
focus on problems that can be reduced to trace estimation but
where the number of operations or steps are sometimes exponentially
large in $n$; that is, problems that are not necessarily in DQC1.

The trace-estimation algorithm 
is shown in Fig.~\ref{fig:DQC1_Alg}.
For a given quantum circuit $V$,
the quantum state before measurement
is
\begin{align}
\nonumber
    \rho_f = \frac 1 {2^{n+1}} &( \ketbra 0 \otimes \one + \ketbra{0}{1} \otimes V^\dagger +  \\
      & + \ketbra{1}{0} \otimes V + \ketbra 1 \otimes \one ) \;.
\end{align}
Projective measurements on $\rho_f$ of the ancilla Pauli operator $\sigma_x$
result in $\pm 1$ outcomes whose average is an estimator of $\Re \Tr [V]/2^n$.
In particular,
\begin{align}
    \langle \sigma_x \rangle &:= \Tr[\rho_f\, \sigma_x] \\
    & = \frac {\Re \Tr[V]}{2^n} \;.
\end{align}
We will then estimate the expectation $\langle \sigma_x \rangle$, and thus $\Re \Tr[V]$,
from finitely many uses of the trace-estimation algorithm.
We obtain:
\begin{lemma}
\label{lem:Hoef}
Given $\varepsilon>0$, $\delta_0>0$,
and a quantum circuit $V$ acting on $n$ qubits, we can obtain an estimate $\hat \xi_V \in \mathbb R$
that satisfies 
\begin{align}
    \left | \hat \xi_V - \Re \Tr[V]\right | \le \varepsilon \;,
\end{align}
with probability at least  $(1-\delta_0)$, using the trace-estimation algorithm
$Q = \lceil (2^{2n+1}/\varepsilon^2) \log(2/\delta_0) \rceil$ times.
\end{lemma}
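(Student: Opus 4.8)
The plan is to recognize this as a standard concentration-of-measure statement and prove it via Hoeffding's inequality. First I would isolate the elementary random variable produced by a single run of the trace-estimation algorithm. Each projective measurement of $\sigma_x$ on $\rho_f$ yields an outcome $X \in \{-1,+1\}$ whose expectation is exactly $\langle \sigma_x \rangle = \Re\Tr[V]/2^n$, as established above. Running the algorithm $Q$ times independently produces i.i.d.\ copies $X_1,\dots,X_Q$, and I would take the estimator to be the rescaled empirical mean $\hat\xi_V := (2^n/Q)\sum_{k=1}^Q X_k$, which is unbiased for $\Re\Tr[V]$.

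The key step is to bound the tail probability of the empirical mean. Since each $X_k$ takes values in the interval $[-1,1]$ of width $2$, Hoeffding's inequality applied to $S_Q = \sum_{k=1}^Q X_k$ gives
\begin{align}
\Pr\!\left[\,\left|\tfrac{1}{Q}S_Q - \langle\sigma_x\rangle\right| \ge t\,\right] \le 2\exp\!\left(-\frac{Q t^2}{2}\right) \;.
\end{align}
I would then translate the target event $|\hat\xi_V - \Re\Tr[V]| \le \varepsilon$ into a statement about this empirical mean: dividing through by $2^n$, the target event is precisely $|\tfrac{1}{Q}S_Q - \langle\sigma_x\rangle| \le \varepsilon/2^n$. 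Substituting $t = \varepsilon/2^n$ into the tail bound yields a failure probability of at most $2\exp(-Q\varepsilon^2/2^{2n+1})$.

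Finally, I would demand that this failure probability not exceed $\delta_0$. Setting $2\exp(-Q\varepsilon^2/2^{2n+1}) \le \delta_0$ and solving for $Q$ gives $Q \ge (2^{2n+1}/\varepsilon^2)\log(2/\delta_0)$, so the stated choice $Q = \lceil (2^{2n+1}/\varepsilon^2)\log(2/\delta_0)\rceil$ suffices. There is no genuine obstacle in this argument; it is entirely routine. The only point requiring care is the bookkeeping of the $2^n$ scaling factor\,---\,since we sample $\langle\sigma_x\rangle$ directly but report $\Re\Tr[V] = 2^n\langle\sigma_x\rangle$, the additive-error tolerance on the empirical mean must be tightened by a factor of $2^n$, which is exactly what produces the $2^{2n}$ factor in $Q$.
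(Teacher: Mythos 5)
Your proof is correct and follows essentially the same route as the paper's Appendix~\ref{app:prooflemma1}: i.i.d.\ $\pm 1$ outcomes, Hoeffding's inequality with tail bound $2\exp(-Qt^2/2)$, the substitution $t=\varepsilon/2^n$ via the rescaled estimator $\hat\xi_V = 2^n s_x$, and solving for $Q$. Your closing remark about the $2^n$ bookkeeping matches the paper's final observation that $2^n\langle\sigma_x\rangle = \Re\Tr[V]$.
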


The proof of Lemma~\ref{lem:Hoef} is a simple consequence of Hoeffding's inequality~\cite{Hoe63}
and is given in Appendix~\ref{app:prooflemma1}.
If $s_x$ is the average of the measurement outcomes of $\sigma_x$, the estimate
is simply $\hat \xi_V = 2^n s_x$.

For our method, we will be interested in estimating the
trace of a given block of a unitary matrix within given additive error.
More specifically, let $W$ be a quantum circuit
defined on a system of $m+m'$ qubits. We obtain:
\begin{corollary}
\label{cor:Hoef2}
Given $\varepsilon>0$, $\delta_0>0$,
and a quantum circuit $W$ acting on $m+m'$ qubits, we can obtain an estimate $\hat \chi_W \in \mathbb R$
that satisfies 
\begin{align}
    \left | \hat \chi_W - \Re \Tr [\bra 0_{m'}W \ket 0_{m'}]\right | \le \varepsilon \;,
\end{align}
with probability at least $(1-\delta_0)$, using the trace-estimation algorithm
$Q = \lceil (2^{2(m+m')+1}/\varepsilon^2) \log(2/\delta_0) \rceil$ times. Here, $\ket 0_{m'} \in \cH_{m'}$
is the zero state of $m'$ qubits and $\bra 0_{m'}W \ket 0_{m'}$ is the corresponding block of $W$.
\end{corollary}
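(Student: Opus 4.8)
The plan is to reduce the statement to Lemma~\ref{lem:Hoef}, applied with $n=m+m'$, by running the trace-estimation algorithm of Fig.~\ref{fig:DQC1_Alg} on the full $(m+m')$-qubit register in the maximally mixed state, taking the controlled unitary to be $W$ itself, and then augmenting the final measurement. The immediate obstacle is that the quantity of interest, $\bra 0_{m'}W\ket 0_{m'}$, is a sub-block of $W$ and is in general \emph{not} unitary, so Lemma~\ref{lem:Hoef} cannot be invoked on it directly. The resolution rests on the elementary identity
\begin{align}
\Tr[\bra 0_{m'}W\ket 0_{m'}] = \Tr[(\one_m\otimes\ketbra{0}{0}_{m'})\,W]\;,
\end{align}
where the left trace is over $\cH_m$, the right trace is over $\cH_{m+m'}$, and $\one_m$ is the identity on the $m$ ``system'' qubits. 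Hence it suffices to estimate the real part of the trace of $\Pi W$ with $\Pi=\one_m\otimes\ketbra{0}{0}_{m'}$, which requires only access to a controlled-$W$ gate.

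First I would run the circuit of Fig.~\ref{fig:DQC1_Alg} with $V=W$ and the register initialized in $\one/2^{m+m'}$, producing the pre-measurement state $\rho_f$ of Sec.~\ref{sec:DQC1}. Instead of measuring only $\sigma_x$ on the clean qubit, I would additionally measure the $m'$ ancilla qubits in the computational basis; since these act on disjoint tensor factors, the observable $O=\sigma_x\otimes\one_m\otimes\ketbra{0}{0}_{m'}$ is well defined and its outcomes in $\{-1,0,+1\}$ are obtained jointly. Evaluating $\Tr[\rho_f\,O]$ by tracking the four blocks of $\rho_f$ exactly as in the derivation of $\langle\sigma_x\rangle$ in Sec.~\ref{sec:DQC1}, and using $\Tr[\Pi W^\dagger]=\overline{\Tr[\Pi W]}$ for the Hermitian projector $\Pi$ together with the identity above, gives
\begin{align}
\langle O\rangle = \Tr[\rho_f\,O] = \frac{\Re\Tr[\bra 0_{m'}W\ket 0_{m'}]}{2^{m+m'}}\;.
\end{align}

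Each run then yields an independent outcome $Y\in\{-1,0,+1\}$ with mean $\langle O\rangle$. Setting $\hat\chi_W=2^{m+m'}\,\bar Y$, where $\bar Y$ is the empirical mean over $Q$ runs, I would apply Hoeffding's inequality exactly as in the proof of Lemma~\ref{lem:Hoef}, now with samples bounded in $[-1,1]$ and with the target additive error on $\bar Y$ chosen to be $\varepsilon/2^{m+m'}$. This reproduces $Q=\lceil(2^{2(m+m')+1}/\varepsilon^2)\log(2/\delta_0)\rceil$ and the claimed accuracy and success probability.

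The main obstacle is the exact evaluation of $\langle O\rangle$: one must verify that augmenting the measurement with the ancilla projector selects precisely the $\ket 0_{m'}$ block of $W$, and that the real part emerges from the symmetric combination of the two off-diagonal blocks of $\rho_f$. Once this expectation is pinned down, the sample-complexity bound is an immediate specialization of Lemma~\ref{lem:Hoef} with the rescaled estimator. I would also note that the extra factor $2^{2m'}$ relative to a naive $2^{2m}$ count reflects the $\sim 2^{-m'}$ probability that the ancilla register lands in $\ket 0_{m'}$, i.e.\ the variance cost of isolating the post-selected block.
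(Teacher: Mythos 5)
Your proposal is correct — the expectation $\Tr[\rho_f\,(\sigma_x\otimes\one_m\otimes\ketbra{0}{0}_{m'})]=\Re\Tr[\bra 0_{m'}W\ket 0_{m'}]/2^{m+m'}$ checks out, the three-outcome samples in $\{-1,0,+1\}$ have range $2$ so Hoeffding gives exactly the stated $Q$ — but it takes a genuinely different route from the paper. The paper keeps the measurement untouched and instead enlarges the circuit: citing Ref.~\cite{SJ08}, it constructs a bona fide unitary $V$ on $n=m+2m'$ qubits (Fig.~\ref{fig:DQCkb_Alg}, with $m'$ CNOTs on pairs of maximally mixed ancillas) satisfying $\tfrac{1}{2^{m'}}\Tr[V]=\Tr[\bra 0_{m'}W\ket 0_{m'}]$, and then invokes Lemma~\ref{lem:Hoef} as a black box with $n=m+2m'$ and rescaled error $2^{m'}\varepsilon$, which yields the same $Q=\lceil(2^{2(m+m')+1}/\varepsilon^2)\log(2/\delta_0)\rceil$ and the estimate $\hat\chi_W=2^{m+m'}s_x$. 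Your version buys $m'$ fewer mixed qubits and no CNOT pre-processing, at the cost of modifying the trace-estimation primitive itself: the one clean qubit model as defined in Sec.~\ref{sec:DQC1} measures \emph{only} the clean qubit, whereas you measure $m'$ additional qubits in the computational basis. This is exactly the obstruction the paper's unitarization is designed to avoid, and it matters downstream — the DQC1-membership half of Theorem~\ref{thm:PFP-additive} needs the protocol to live strictly inside the model. Your approach can be salvaged by the standard fact that measuring $O(\log m)$ extra qubits (here $m'=O(\log L)=O(\log m)$ in the applications) does not increase the power of DQC1, but that requires an extra argument the paper's route never needs; as stated, your protocol also does not literally satisfy the corollary's clause ``using the trace-estimation algorithm $Q$ times.'' Your closing observation that the $2^{2m'}$ overhead reflects the $\sim 2^{-m'}$ post-selection probability is a nice interpretation that applies equally to both constructions.
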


Corollary~\ref{cor:Hoef2} follows from the observation that
there is a quantum circuit $V$, acting on $n=m+2m'$ qubits,
and
\begin{align}
\label{eq:blocktrace}
    \frac 1 {2^{m'}}\Tr [V] = \Tr [\bra 0_{m'}W \ket 0_{m'}] \;;
\end{align}
see Ref.~\cite{SJ08}.
The unitary $V$ is described in Fig~\ref{fig:DQCkb_Alg}. The proof of Cor.~\ref{cor:Hoef2}
follows from Lemma~\ref{lem:Hoef},
where the number of qubits is $n=m+2m'$. The estimate in this case is $\hat \chi_W = 2^{m+m'} s_x$.

 \begin{figure}[htb]
    \includegraphics[width=6cm]{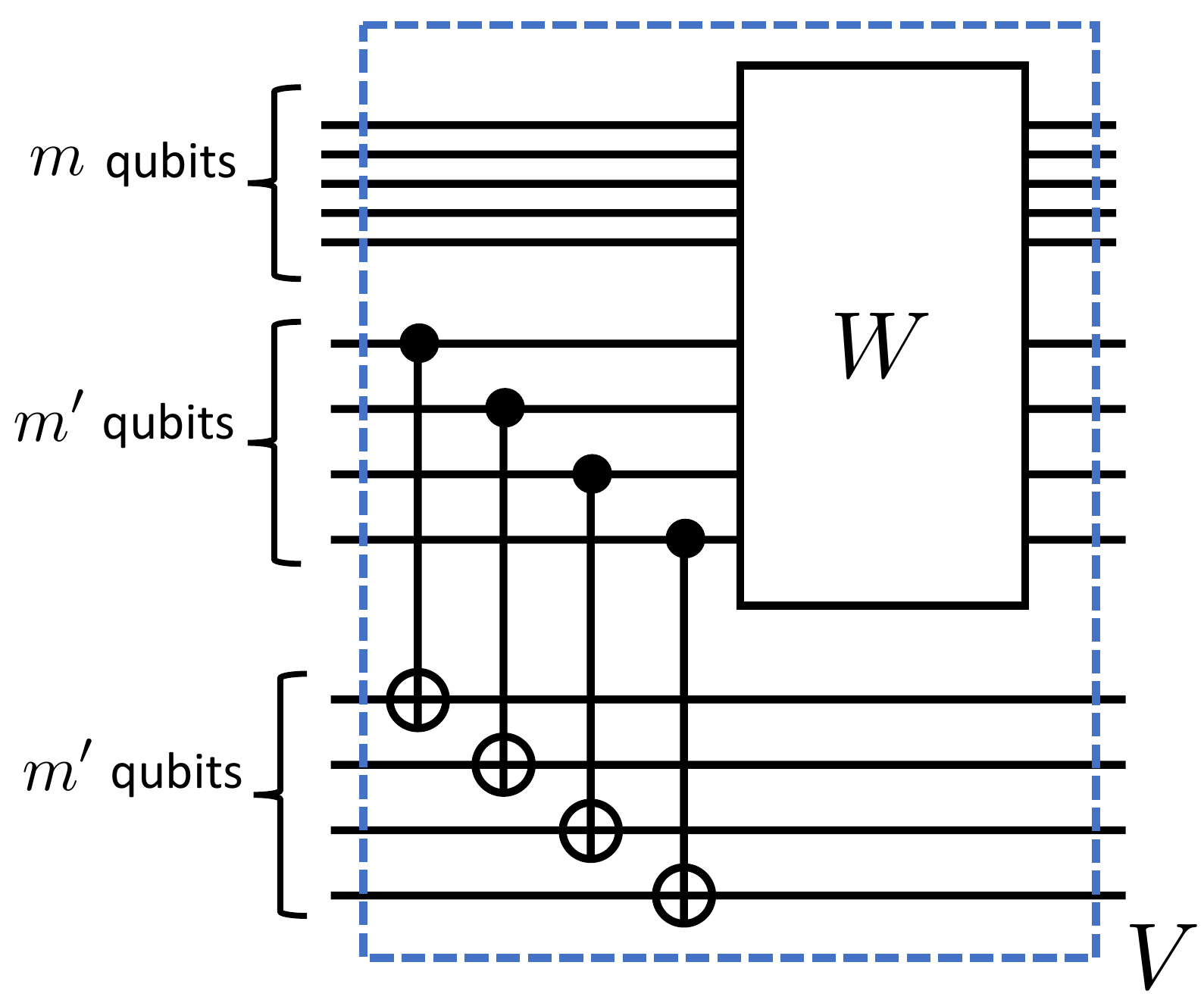}
     \caption{Quantum circuit $V$ that satisfies $\frac 1 {2^{m'}}\Tr[V]=\Tr [\bra 0_{m'}W \ket 0_{m'}]$.
     The first operations are a sequence of $m'$ CNOT gates on the corresponding pairs of ancilla qubits~\cite{SJ08}.
     }
    \label{fig:DQCkb_Alg}
\end{figure}

\section{Approximations of the exponential operator}
\label{sec:LCU}

Our method for estimating the partition function in the one clean 
qubit model proceeds by approximating it as a weighted sum of traces of unitary operators.
Each such trace can then be computed through repeated uses of
the trace-estimation algorithm of Fig.~\ref{fig:DQC1_Alg}.
We now describe two approximations of the exponential operator 
that will be used.

\subsection{Chebyshev approximation}
\label{sec:chebyshev}
The first approximation is based on Chebyshev polynomials. 
If the $m$-qubit Hamiltonian  satisfies $\|H\| \le 1$, we obtain
\begin{align}
\label{eq:chebdec}
    e^{-\beta H} = \sum_{k=-\infty}^\infty (-1)^k I_k(\beta) T_k(H) \;.
\end{align}
Here, $I_k(x) \in \mathbb R$ are the modified Bessel functions of the first kind. $T_k(H)$ is an operator acting on $\cH_m$
obtained by replacing $x$ by $H$ in $T_k(x)$,
the $k$-th Chebyshev polynomial of the first kind\,---\,see Appendix~\ref{app:chebyapprox}. 
We will approximate the exponential operator by a finite sum,
by noticing that $I_k(\beta)$ decays exponentially fast in the large $k$ limit 
(for fixed $\beta$). In Appendix~\ref{app:chebyapprox} we show:
\begin{lemma}
\label{lem:finitecheby}
Given $\eabs>0$ and $\beta \ge 0$, we can choose $K = \lceil m+ e \beta + \log_2(1/\eabs) +2 \rceil$ such that
\begin{align}
    \|S_K - e^{-\beta H} \|_1 \le \eabs/2 \; ,
\end{align}
where $\|\cdot\|_1$ denotes the trace norm and
\begin{align}
\label{eq:chebyappx}
    S_K &:= \sum_{k=-K}^K (-1)^k I_k(\beta) T_k(H) \; .
\end{align}
\end{lemma}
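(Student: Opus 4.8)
The plan is to treat $S_K - e^{-\beta H}$ as the tail of the series in \eq{chebdec} and to bound its trace norm term by term. Subtracting \eq{chebyappx} from \eq{chebdec} gives $S_K - e^{-\beta H} = -\sum_{|k|>K}(-1)^k I_k(\beta) T_k(H)$, so by the triangle inequality for $\|\cdot\|_1$ I would write $\|S_K - e^{-\beta H}\|_1 \le \sum_{|k|>K} |I_k(\beta)|\, \|T_k(H)\|_1$. The task then splits into two independent estimates: a dimension-dependent bound on $\|T_k(H)\|_1$, and a tail bound on the Bessel coefficients $I_k(\beta)$.

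For the first estimate, since $H$ is Hermitian with $\|H\|\le 1$, its eigenvalues lie in $[-1,1]$, and $T_k(H)$ is the Hermitian operator with eigenvalues $T_k(\lambda)$. Because $|T_k(x)|\le 1$ for all $x\in[-1,1]$, every eigenvalue of $T_k(H)$ has modulus at most $1$, and hence $\|T_k(H)\|_1 = \sum_\lambda |T_k(\lambda)| \le M = 2^m$, uniformly in $k$. Using the symmetry $I_{-k}(\beta)=I_k(\beta)$ and the positivity $I_k(\beta)\ge 0$ for $\beta\ge 0$, the bound collapses to $\|S_K - e^{-\beta H}\|_1 \le 2^{m+1}\sum_{k>K} I_k(\beta)$. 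It now suffices to choose $K$ so that $\sum_{k>K} I_k(\beta) \le \eabs/2^{m+2}$.

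The crux is the tail bound on $\sum_{k>K} I_k(\beta)$, and this is where I expect the real work. Starting from the power series $I_k(\beta)=\sum_{j\ge 0}(\beta/2)^{2j+k}/\big(j!\,(j+k)!\big)$ and using $(j+k)!\ge k!\,j!$ for $k\ge 0$, I would extract the leading factor to get $I_k(\beta)\le \big((\beta/2)^k/k!\big) I_0(\beta) \le \big((\beta/2)^k/k!\big)e^{\beta}$. Applying the Stirling lower bound $k!\ge (k/e)^k$ then yields $I_k(\beta)\le \big(e\beta/(2k)\big)^k e^{\beta}$. The role of the threshold $e\beta$ in the definition of $K$ is precisely that once $k>e\beta$ the base $e\beta/(2k)$ drops below $1/2$, so for $k>K$ (which is arranged to force $k>e\beta$) one has $I_k(\beta)\le 2^{-k}e^{\beta}$; summing the resulting geometric series gives $\sum_{k>K} I_k(\beta)\le 2^{-K}e^{\beta}$. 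Taking $\log_2$ of the target inequality $2^{m+1}\cdot 2^{-K}e^{\beta}\le\eabs/2$ reduces the statement to the linear condition $K\ge m+2+\log_2(1/\eabs)+\beta\log_2 e$, which the stated choice $K=\lceil m + e\beta + \log_2(1/\eabs) + 2\rceil$ satisfies since $\log_2 e < e$; the additive $+2$ further absorbs the ceiling and guarantees $K\ge e\beta$ so that the geometric bound applies (it also covers edge cases such as $\beta=0$ or $\eabs$ close to $1$). The main obstacle is thus purely analytic: converting the factorial (super-geometric) decay of $I_k(\beta)$ into a clean closed-form tail that keeps $K$ linear in $m$, $\beta$, and $\log_2(1/\eabs)$; everything else is bookkeeping with the triangle inequality and the spectral bound on $T_k(H)$.
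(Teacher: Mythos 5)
Your proposal is correct and follows essentially the same route as the paper's proof in Appendix~B: the triangle inequality with the uniform bound $\|T_k(H)\|_1 \le M$ from $|T_k(x)|\le 1$, the Bessel tail estimate $I_k(\beta) \le (\beta/2)^k I_0(\beta)/k!$ via $(r+k)! \ge k!\,r!$, then $k! \ge (k/e)^k$ and $I_0(\beta) \le e^\beta$, and finally a geometric series once $K \ge e\beta$ forces the base below $1/2$, reducing everything to the same linear condition $K \ge m + 2 + \log_2(1/\eabs) + \beta\log_2 e$ with $\log_2 e < e$. The only cosmetic difference is that you bound each term by $2^{-k}e^\beta$ individually while the paper sums $(\beta e/(2K))^k$ directly; the argument and constants are the same.
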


Equivalently, $S_K=I_0(\beta) + 2 \sum_{k=1}^K (-1)^k I_k(\beta) T_k(H)$.
To represent $S_K$ as a linear combination
of suitable operations for our method, 
we further assume that there exist unitary operators $W_H$ and $\tilde G$,
acting on $m+m'$ qubits, that satisfy
\begin{align}
\label{eq:iterate}
    T_k(H) = \sbra{0}_{m'}\tilde G^\dagger (W_H)^k \tilde G\sket{0}_{m'} \, .
\end{align}
The operation $W_H$ in Eq.~\eqref{eq:iterate}
is the ``unitary iterate'' or the quantum walk operator as used recently for Hamiltonian simulation~\cite{LC19}   or linear algebra problems~\cite{CKS17}, and $\tilde G$ is a related state-preparation unitary. 
We describe $W_H$ and $\tilde G$ in detail in Sec.~\ref{sec:blockencoding}.

Our first approach solves the PFP using the relation
\begin{align}
 \label{eq:firstappx}
    \Tr[S_K] =  I_0(\beta)&M+2 \sum_{k=1}^K (-1)^kI_k(\beta)  \nonumber \\
    &\times\Re\Tr [\bra 0 _{m'}\tilde G^\dagger (W_H)^k \tilde G \ket 0_{m'}] \;,
\end{align}
which is an $\eabs/2$ approximation to $\cZ$.
We can use the construction in Fig.~\ref{fig:DQCkb_Alg} in the trace-estimation algorithm of Fig.~\ref{fig:DQC1_Alg} to obtain $\Re\Tr [\bra 0 _{m'}\tilde G^\dagger (W_H)^k \tilde G \ket 0_{m'}]$.

\subsection{LCU approximation}
\label{sec:lcu}
The second approximation is based on the so-called Hubbard-Stratonovich transformation~\cite{Hub59,CS16}. If $H \ge 0$,
we obtain
\begin{align}
\label{eq:HST}
    e^{-\beta H} = \frac 1{\sqrt{2 \pi}} \int dy \; e^{-y^2/2} e^{-i y \sqrt{2 \beta H}} \;.
\end{align}
Here, $\sqrt H$ is also a Hermitian operator that refers to one of the square roots of $H$.
As the eigenvalues of $H$ are non-negative, this case appears to be more restrictive.
Nevertheless, the assumption $H\ge 0$ may be met after a simple pre-processing step that shifts $H$\,---\,see Sec.~\ref{sec:Uimp}.

We wish to obtain an approximation of $e^{-\beta H}$ by a finite linear
combination of unitaries following Eq.~\eqref{eq:HST}. This approximation is analyzed in Appendix~\ref{app:unitapprox} and
was also studied in Ref.~\cite{CS16}. 
If $\|H \| \le 1$, we obtain: 
 \begin{lemma} 
 \label{lem:discreteHTS}
 Given $\eabs > 0$ and $\beta \ge 0$, we can choose
 \begin{align}
      J &= \lceil 12 (\sqrt \beta + \sqrt{m +\log_2(1/\eabs)})\sqrt{m +\log_2(1/\eabs)}\rceil
 \end{align}
 and
 \begin{align}
     \delta y = \left( 2 ( \sqrt {\beta} + \sqrt{ m+\log_2(1/\eabs)} )\right)^{-1}
 \end{align}
 such that
 \begin{align}
    \left \| X_J - e^{-\beta H} \right \|_1 \le \eabs/4 \;,
\end{align}
where
 \begin{align} 
\label{eq:discreteHTS1}
X_J := \frac{\delta y}{\sqrt{2\pi}}\sum_{j=-J}^J  e^{-y_j^2/2}e^{-iy_j\sqrt{2\beta H}}
\end{align}
and $y_j=j \delta y$.
\end{lemma}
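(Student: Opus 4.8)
The plan is to diagonalize $H$ and thereby reduce the operator estimate to a single scalar approximation problem, then split the resulting error into a \emph{truncation} part (the tail of the sum beyond $|j|=J$) and a \emph{discretization} part (replacing the continuous integral of Eq.~\eqref{eq:HST} by the full, untruncated Riemann sum). Since $H$ is Hermitian with $0\le H$ and $\|H\|\le 1$, I would write $H=\sum_\lambda \lambda\,\Pi_\lambda$ with eigenvalues $\lambda\in[0,1]$, so that $e^{-\beta H}$ and $X_J$ are simultaneously diagonal. Setting $\theta_\lambda:=\sqrt{2\beta\lambda}\in[0,\sqrt{2\beta}]$, $g(\theta):=\frac{1}{\sqrt{2\pi}}\int dy\, e^{-y^2/2}e^{-iy\theta}=e^{-\theta^2/2}$, and $g_J(\theta):=\frac{\delta y}{\sqrt{2\pi}}\sum_{j=-J}^{J}e^{-y_j^2/2}e^{-iy_j\theta}$, the trace norm collapses (a diagonal operator's trace norm is the sum of the absolute values of its $M=2^m$ eigenvalues) to
\[
\|X_J - e^{-\beta H}\|_1 = \sum_\lambda |g_J(\theta_\lambda) - g(\theta_\lambda)| \le M \max_{\theta\in[0,\sqrt{2\beta}]} |g_J(\theta) - g(\theta)|.
\]
It therefore suffices to establish the uniform scalar bound $\max_\theta|g_J(\theta)-g(\theta)|\le \eabs/(4M)=\eabs\,2^{-m}/4$; the combination $m+\log_2(1/\eabs)$ appearing throughout the statement is exactly what absorbs this factor of $M$.

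For the truncation part I would bound the difference between $g_J(\theta)$ and the full sum $\frac{\delta y}{\sqrt{2\pi}}\sum_{j=-\infty}^\infty e^{-y_j^2/2}e^{-iy_j\theta}$ by $\frac{\delta y}{\sqrt{2\pi}}\sum_{|j|>J}e^{-y_j^2/2}$, using $|e^{-iy_j\theta}|=1$. Since $e^{-y^2/2}$ is decreasing for $y>0$, this tail sum is dominated by the corresponding Gaussian tail integral, and the standard estimate $\int_a^\infty e^{-y^2/2}\,dy\le e^{-a^2/2}/a$ yields a bound of order $e^{-y_J^2/2}$ with $y_J=J\,\delta y$. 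The virtue of the chosen parameters is that, after the ceiling, $y_J=J\,\delta y\ge 6\sqrt{m+\log_2(1/\eabs)}$, so $y_J^2/2\ge 18\,(m+\log_2(1/\eabs))$ and the tail is at most $e^{-18(m+\log_2(1/\eabs))}$, comfortably below the target $\eabs\,2^{-m}/8$ because $18\gg\ln 2$. The constant $12$ and the factor $2$ in $\delta y$ are kept slightly generous precisely so this comparison goes through.

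The discretization part is where the real work lies, and I would handle it with the Poisson summation formula. Writing $\phi(y):=\frac{1}{\sqrt{2\pi}}e^{-y^2/2}e^{-iy\theta}$, its Fourier transform is $\tilde\phi(\omega)=e^{-(\theta+\omega)^2/2}$, so Poisson summation gives the exact identity
\[
\delta y\sum_{j=-\infty}^\infty \phi(j\,\delta y) = \sum_{n=-\infty}^\infty e^{-(\theta + 2\pi n/\delta y)^2/2},
\]
whose $n=0$ term is exactly $g(\theta)$. Hence the discretization error is the aliasing sum $\sum_{n\ne 0}e^{-(\theta+2\pi n/\delta y)^2/2}$. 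Over $\theta\in[0,\sqrt{2\beta}]$ its largest term is the $n=-1$ term at $\theta=\sqrt{2\beta}$, namely $e^{-(2\pi/\delta y-\sqrt{2\beta})^2/2}$, and I would bound the whole aliasing sum by a convergent geometric-type series with this leading term. With $\delta y=(2(\sqrt\beta+\sqrt{m+\log_2(1/\eabs)}))^{-1}$ one has $2\pi/\delta y=4\pi(\sqrt\beta+\sqrt{m+\log_2(1/\eabs)})$, so $2\pi/\delta y-\sqrt{2\beta}\ge 4\pi\sqrt{m+\log_2(1/\eabs)}$, making this term super-exponentially small and in particular far below $\eabs\,2^{-m}/8$.

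The main obstacle will be this discretization estimate: the oscillatory factor $e^{-iy\theta}$ must be tracked carefully, since it shifts the Fourier argument and hence the location of the aliasing terms; one must verify that $2\pi/\delta y$ genuinely dominates $\sqrt{2\beta}$ across the entire spectral window so the bound holds uniformly in $\theta$; and the constants of the two contributions must be bookkept so they jointly stay under $\eabs/(4M)$ with the stated $J$ and $\delta y$. Adding the truncation and discretization bounds and multiplying by $M$ then gives $\|X_J-e^{-\beta H}\|_1\le\eabs/4$, as claimed.
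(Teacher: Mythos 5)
Your proposal is correct and takes essentially the same route as the paper's proof in Appendix~\ref{app:unitapprox}: reduce to a uniform scalar bound over the eigenvalues $\lambda\in[0,1]$ (paying the factor $M$ that the $m+\log_2(1/\eabs)$ terms absorb), control the aliasing error via Poisson summation with shifted frequencies $\omega_k = 2\pi k/\delta y + \sqrt{2\beta\lambda}$, and control the truncation error by a Gaussian tail integral. The only differences are presentational\,---\,the paper first proves the scalar estimate for a generic target $\epsilon$ under the condition $(2\pi/\delta y)\ge\sqrt{2\beta\lambda}+\sqrt{2\log(5/\epsilon)}$ and then substitutes $\epsilon=\eabs/(4M)$, whereas you verify the stated $J$ and $\delta y$ directly.
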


The proof is in Appendix~\ref{app:unitapprox}.
Lemma~\ref{lem:discreteHTS} relates the exponential
operator with unitary operators that correspond
to evolutions under $\sqrt H$ for various times.
These evolution operators may not be available\,---\,in fact, computing the square
root of a Hamiltonian can be related to other computationally hard problems. 
To overcome this issue, we may construct a Hamiltonian $ H'$, acting on $m+m'_1$ qubits,
that satisfies
\begin{align}
\label{eq:H'def}
    ( H')^2 \ket \phi_m \ket 0_{m'_1} = (H \ket \phi_m) \ket 0_{m'_1} \;,
\end{align}
for all pure states $\ket \phi \in \cH_m$. Equation~\eqref{eq:H'def}
resembles the spectral gap amplification technique discussed in Ref.~\cite{SB13}. We discuss how to build $H'$ in Sec.~\ref{sec:SGA} for $H$ that is given as a linear combination of projectors.

Let $W_t:=e^{-itH'}$ be the evolution operator under $H'$.
Assume that there exists a unitary $W'_t$, which is an approximation of $W_t$ that acts on $m+m' > m+m'_1$ qubits and satisfies
\begin{align}
\label{eq:W'tcondition}
| \Tr[\bra 0_{m'_1} W_t \ket 0_{m'_1}] - \Tr[\bra 0_{m'} W'_t \ket 0_{m'}]| \le \eabs/8 \;,
\end{align}
for all $t$.
If $t_j := j \delta y \sqrt{2 \beta}$,
in Appendix~\ref{app:unitapprox} we show
\begin{align}
 \label{eq:tracerelation}
| \Tr [\bra 0 _{m'} X'_J \ket 0_{m'}] - \Tr[X_J]  | \le \eabs /4\;,
\end{align}
with
\begin{align}
\label{eq:discreteHTS2}
    X'_J:=\frac{\delta y}{\sqrt{2\pi}}\sum_{j=-J}^J  e^{-y_j^2/2} W'_{t_j} \;.
\end{align}
Our second approach
solves the PFP by using the relation
\begin{align}
\nonumber
 \Tr[\bra 0 _{m'} & X'_J \ket{0}_{m'}] = \\
   \label{eq:secondappx}
  &  \frac{\delta y}{\sqrt{2\pi}} (M+ 2 \sum_{j=1}^J  e^{-y_j^2/2} \Tr[\bra 0 _{m'} W'_{t_j}\ket 0 _{m'} ]\;,
\end{align}
which is an $\eabs/2$ approximation to $\cZ$.
We can use the construction of Fig.~\ref{fig:DQCkb_Alg} in the trace-estimation
algorithm of Fig.~\ref{fig:DQC1_Alg} to obtain $\Re\Tr [\bra 0_{m'} W'_{t_j}\ket 0_{m'}]$.

\section{Block-encoding and Hamiltonian simulation}
\label{sec:Uimp}

Our algorithms for estimating the partition function require implementing the unitary $W_H$,
which satisfies Eq.~\eqref{eq:iterate}, or simulating time evolution with a Hamiltonian $H'$, which satisfies Eq.~\eqref{eq:H'def}. 
These operations can be implemented efficiently for suitable specifications of $H$. 
Of interest in this paper are the cases where $H$ is given as a linear combination of unitary operators or projectors. These cases include physically relevant Hamiltonians that can be decomposed into a sum of tensor products of Pauli matrices, e.g., $k$-local Hamiltonians, Hamiltonians appearing in fermionic systems, and more. They also include the so-called frustration-free Hamiltonians
that are of relevance in quantum computing and condensed matter (cf., ~\cite{BT09,SB13,BS12}).

Our method may also be applied more broadly, e.g., to sparse Hamiltonians, but the resulting complexities may be large. This is because known methods for simulating sparse Hamiltonians may require, in general, a number of ancillary qubits $m'=\text{poly}(m)$, and the resulting complexities of
our method are exponential in $m'$.

\subsection{Implementing $W_H$} 
\label{sec:blockencoding}
For Algorithm 1.A, we will focus on the case where the Hamiltonian is specified as a linear combination of unitary operators.
Here, $H =\sum_{l=1}^{L} \alpha_l H_l $, where $\alpha_l >0$ and each $H_l$ is unitary.
We further assume that there exist quantum circuits, of maximum gate complexity $C_H$, that implement the $H_l$'s. 
Algorithm 1.A requires that $\|H\| \le 1$. We can satisfy this condition if we work with the renormalized Hamiltonian instead, as discussed in Sec.~\ref{sec:PFP}.
The renormalization is achieved by a simple pre-processing step, whose complexity is not significant. 
Therefore, with no loss of generality, we assume that $\sum_{l=1}^L\alpha_l=1$ and $L=2^{m'_1}$.

The unitary $W_H$ can be constructed following a procedure in Ref.~\cite{LC19}.
We define the unitary $G$ via
\begin{align}
   \label{eq:stateG}
   G \ket{0}_{m_1'} &:= \sum_{l=1}^{L} \sqrt{\alpha_l}\ket{l} \;,
\end{align}
which can be implemented with $O(L)$ two qubit gates, 
and the unitary
\begin{align}
\label{eq:U}
    U := \sum_{l=1}^{L} H_l \otimes \ketbra l _{m'_1} \;,
\end{align}
which acts on $m+m'_1$ qubits~\cite{BCC+14,CKS17}.
Let $U':=   U \otimes \ketbra 0^{(a)} +  U^\dagger \otimes \ketbra 1^{(a)}$,
where we added an ancilla qubit $a$, and $U'$ is a unitary acting on $m+m'$ qubits, with $m'=m'_1+1$. 
Defining $\tilde G = G\otimes {\text H}^{(a)}$, where ${\text H}^{(a)}$ denotes the Hadamard gate acting on  $a$, we obtain
\begin{align}
    W_H = (\one_m  \otimes (2 \tilde G \ketbra{0}_{m'} \tilde G^\dagger - \one_{m'}))\, \sigma_x^{(a)}\, U' \;.
\end{align}
Here, $\sigma_x^{(a)}$ corresponds to the qubit flip Pauli operator of the ancilla $a$
and $\one_n$ is the identity operator acting on $\cH_n$. 
In Ref.~\cite{LC19} it is shown that this choice of $W_H$ and $\tilde G$ satisfy Eq.~\eqref{eq:iterate} and, in particular, 
$\bra 0_{m'} \tilde G^\dagger U' \tilde G \ket 0 _{m'}=H$, which is a block encoding.
The gate complexity of $W_H$ is  $O(LC_H)$.

\subsection{Implementing $W'_t$} 
\label{sec:SGA}
Due to the requirement $H \ge 0$, here we assume that the Hamiltonian is specified as
$H= \sum_{l=1}^{L} \alpha_l H_l$, where $\alpha_l > 0$ and each $H_l$ is Hermitian and a projector, i.e.,
it satisfies $(H_l)^2 = H_l \ge 0$. That is, the eigenvalues of $H_l$ are 0,1. Note that the
case in Sec.~\ref{sec:blockencoding} can be reduced to this one if the eigenvalues of the unitaries are $\pm 1$ simply by shifting each unitary.
Algorithm 1.B also requires that $\|H\| \le 1$, which is satisfied if we work with the
renormalized Hamiltonian as before. With no loss of generality, we assume that $L+1 = 2^{m'_1}$, and $\sum_{l=1}^{L}\alpha_l=1$.

We now construct the operator $H'$ defined by Eq.~\eqref{eq:H'def} and further decompose it into a linear combination of a constant number of unitaries, improving upon a similar construction discussed in Ref.~\cite{CS16} that required $O(L)$ unitaries. This helps reduce the cost of trace-estimation in the one clean qubit model since fewer unitaries in the decompositions means that fewer ancilla qubits are required to implement $H'$. 

Following the technique of spectral gap amplification~\cite{SB13}, we obtain
\begin{align}
\label{eq:H'def2}
    H' = \sum_{l=1}^{L} \sqrt{\alpha_l} H_l \otimes \left(\ketbra{l}{0}_{m'_1}+\ketbra{0}{l}_{m'_1}\right) \;,
\end{align}
which acts on a space of $m +m'_1$ qubits. It requires computing the coefficients  $\sqrt{\alpha_l}$ in a simple pre-processing step.
Our goal is to simulate $e^{-itH'}$ and, to this end, we seek a decomposition of $H'$ as a linear combination 
of unitaries. Following Sec.~\ref{sec:blockencoding}, 
we define
\begin{align}
  G\ket 0_{m_1'}  & = \sum_{l=1}^{L} \sqrt{\alpha_l} \ket l \;.
\end{align}
We also define an operator which acts on $m + m'_1$ qubits, where ($H_0:= \one_m$)
\begin{align}
    X:= \sum_{l=0}^{L}  H_l \otimes \ketbra l_{m'_1} \;.
\end{align}
Then,
\begin{align}
H' = X G\ketbra{0}{0}_{m'_1} + h.c. \; .
\end{align}
Since $2\ketbra 0_{m'_1} = \one_{m'_1} -e^{i \pi \ketbra 0_{m'_1}}$ and 
$X = (\one+U)/2$ for unitary $U$, we can easily decompose $H'$ as a linear combination of, at most, 8 unitaries. If $C_H$ is an upper bound on the gate complexity of the unitaries $(2H_l - \one_m)$, the gate complexity of $U$ is $O(LC_H)$. Furthermore, the gate complexity of each of the unitaries in the decomposition of $H'$ will also be $O(LC_H)$, and the triangle inequality implies $\|H'\|=O(1) $.

Once the decomposition of $H'$ into a linear combination of unitaries is obtained, we can use the result in Ref.~\cite{LC19} for Hamiltonian simulation. This provides a quantum algorithm to implement a unitary $W'_t$, acting on $m+m'>m+m'_1$ qubits, such that
it approximates $W_t$ within additive error $\eabs/(8M)$ in the spectral norm. 
More specifically, the results in Ref.~\cite{LC19} imply
\begin{align}
\label{eq:W'tdef}
   | \Tr [\bra 0_{m_1'} W_t \ket 0_{m_1'}] - \Tr [\bra 0_{m'}  W'_t \ket 0_{m'}] | \le \eabs/8 \;,
\end{align}
which is the desired condition of Eq.~\eqref{eq:W'tcondition}.
The gate complexity for implementing $W'_t$  is $O ( L C_H ( |t| + \log(M/\eabs) ))$ and
the number of ancillary qubits is $m'=m'_1+m'_2$, with $m'_2=O(1)$ in this case.

\section{Algorithms}
\label{sec:alg}

We provide two algorithms in the one clean qubit model
for solving the PFP, based on the previous approximations.
We first focus on the case where the approximation
of the partition function is obtained within a given additive error
and next use this case 
to obtain an approximation within given relative error.
To this end, we assume that we know $\cZ_{\max}$
such that $\cZ_{\max} \ge \cZ > 0$. In particular,
under our assumptions, we may choose
$\cZ_{\max} = M e^{\beta}$ or $\cZ_{\max} = M$, depending on whether 
we work with the first or second approximation, respectively.

\subsection{Estimation within additive error}

Algorithm 1.A provides an estimate according to Eq.~\eqref{eq:firstappx}. Specifically, for given $\eabs>0$ and $\delta>0$, it outputs $\hat \cZ$
such that $|\hat \cZ-\cZ| \le \epsilon_{\rm abs}$ with probability at least $(1-\delta)$. 
The algorithm itself only assumes $\|H\| \le 1$ and the existence of a procedure $W_H$ that acts on $m+m'$ qubits and satisfies Eq.~\eqref{eq:iterate}.
In Sec.~\ref{sec:blockencoding} we discussed the construction of $W_H$ for Hamiltonians that are given as a linear combination of $L$ unitaries with $m'=O(\log L)$.
This is required to build unitaries $V(k)$ acting on $m+2m'$ qubits according to Fig.~\ref{fig:DQCkb_Alg}: $V(k)$ is obtained by replacing $W$ with $\tilde G^\dagger(W_H)^k\tilde G$. 

\begin{algorithm}[H]
\renewcommand{\thealgorithm}{}
\floatname{algorithm}{\hspace{3.3cm} Algorithm 1.A}
\setstretch{1.25}
\vspace{1mm}
{\bf Input:} $\epsilon_{\rm abs}>0$, $\delta>0$, $\cZ_{\max} > 0$. \\
-- Obtain $K$ according to Lemma~\ref{lem:finitecheby}.\\
-- Set $\varepsilon = \eabs/(2e^\beta)$, $\delta_0= \delta/K$, and obtain $Q$ \\
\phantom{--} according to Cor.~\ref{cor:Hoef2}. \\
-- For each $k=1,\ldots,K$: \\
\phantom{xxxx} Run the trace estimation algorithm $Q$ times with \\
\phantom{xxxx} unitary $V(k)$. Obtain $\hat \chi_k =2^{m+m'}s_x$ where $s_x$ is \\
\phantom{xxxx}
 the average of the measurement outcomes of $\sigma_x$. \\
-- Compute $\hat \cY= I_0(\beta)M + 2 \sum_{k=1}^K (-1)^kI_k(\beta) \hat \chi_k$.\\
{\bf Output:} $\hat \cZ =\hat \cY$ if $\cZ_{\max}>\hat \cY \ge 0$, $\hat \cZ=0$ if $\hat \cY <0$,\\ 
 \phantom{\bf Output:} and $\hat \cZ=\cZ_{\max}$ otherwise.\\
\vspace{-3mm}
 \caption{}
\label{alg:cheby}
\end{algorithm}

Algorithm 1.B provides an estimate according to Eq.~\eqref{eq:secondappx}. It assumes $\|H\| \le 1$, $H \ge 0$, and the existence
of a procedure $W'_t$ that approximates the evolution under a Hamiltonian $H'$ which satisfies Eq.~\eqref{eq:H'def}. 
We described an implementation of $W'_t$ for Hamiltonians that are given as a linear combination of projectors in Sec.~\ref{sec:SGA}.
This procedure is required to build
unitaries $V'(t)$ acting on $m+2m'$ qubit  according to Fig.~\ref{fig:DQCkb_Alg}: $V'(t)$ is obtained
by replacing $W$ with  $W'_t$. In the following, $t_j:= y_j\sqrt{2\beta}$ with $y_j = j \delta y$.
\begin{algorithm}[H]
\renewcommand{\thealgorithm}{}
\floatname{algorithm}{\hspace{3.3cm} Algorithm 1.B}
\setstretch{1.25}
\vspace{1mm}
{\bf Input:} $\epsilon_{\rm abs}>0$, $\delta >0$, $\cZ_{\max} > 0$. \\
-- Obtain $J$ and $\delta y$ according to Lemma~\ref{lem:discreteHTS}.\\
-- Set $\varepsilon = \eabs/4$, $\delta_0= \delta/J$, and obtain $Q$ \\
\phantom{--} according to Cor.~\ref{cor:Hoef2}. \\
-- For each $j=1,\ldots,J$: \\
\phantom{xxxx} Run the trace estimation algorithm $Q$ times with \\
\phantom{xxxx} unitary $V'(t_j)$. Obtain $\hat \chi_j =
2^{m+m'} s_x$ where $s_x$ is \\
\phantom{xxxx} the average of the measurement outcomes of $\sigma_x$.
 \\
-- Compute $\hat \cY=(\delta y/\sqrt{2\pi})(M+ 2 \sum_{j=1}^J e^{-y_j^2/2} \hat \chi_j)$.\\
{\bf Output:} $\hat \cZ =\hat \cY$ if $\cZ_{\max}>\hat \cY \ge0$, $\hat \cZ=0$ if $\hat \cY<0$,\\ 
 \phantom{\bf Output:} and $\hat \cZ=\cZ_{\max}$ otherwise.\\
\vspace{-3mm}
 \caption{}
\label{alg:HST}
\end{algorithm}

For simplicity,
in the following we will refer to both Algorithms 1.A and 1.B as ${\textsc{ EstimatePF}}(\eabs,\delta,\cZ_{\max})$.

\subsection{Estimation within relative error}
\label{sec:relative_error}
The PFP is formulated in terms of a relative error $\erel$ and error probability $\delta$. 
Given a known lower bound $\cZ_{\text{min}}$ on $\cZ$, a naive approach to achieve the desired relative precision would be to simply perform an additive-error estimation with target precision $\eabs'=\erel\cZ_{\text{min}}$. The drawback of this is that the resulting complexity will scale as $1/(\erel\cZ_{\text{min}})^2$, resulting in significant overhead if $\cZ\gg\cZ_{\text{min}}$.

We therefore develop a classical procedure (Algorithm 3 in Appendix~\ref{appx:est_rel_err}) for obtaining an estimate within a desired relative precision from multiple additive estimations that avoids the overhead mentioned above. In this section, we discuss its application to the PFP. The procedure, however, goes beyond the PFP and may be of independent interest. Note that this approach is distinct from other known methods for obtaining relative approximations, which involve either expressing the measurable quantity as a telescopic product of ratios~\cite{DFK91,JS93,PW09}, or estimating the logarithm of the quantity (i.e. the free energy in our case) within additive error~\cite{Bar16,HMS19}. 
 
\begin{algorithm}[H]
\renewcommand{\thealgorithm}{}
\floatname{algorithm}{\hspace{3.3cm}Algorithm 2}
\setstretch{1.25}
\vspace{1mm}
{\bf Input:} $\epsilon_{\rm rel}>0$, $\delta >0$, $\cZ_{\max}> 0$. \\
-- Set $r=0$, $\cZ_0=\cZ_{\max}$, and $\hat \cZ_0 = 0$.\\
-- While $\cZ_r > \hat \cZ_r$:\\
\phantom{xxxx}$r \leftarrow r+1$. \\
\phantom{xxxx}Set $\cZ_r =\cZ_{\max} /2^{r}$, $\eabs(r)=\erel\cZ_r/2$, and \\
\phantom{xxxx}$\delta'(r) = \tfrac{6}{\pi^2}\left(\delta/r^2\right)$.\\
\phantom{xxxx}$\hat \cZ_r = {\textsc{EstimatePF}}(\eabs(r),\delta'(r), \cZ_{\max})$\\
{\bf Output:} $\hat \cZ=\hat \cZ_r$.\\
\vspace{-3mm}
 \caption{}
\end{algorithm}

The value of $r$ at which Algorithm 2 stops is a random variable $R$.
For any given instance of the PFP, we show that
the expected value of $R$, $E_R$, is bounded as $E_R\le(\lceil \log_2(\cZ_{\max}/\cZ)\rceil + 3)$, and
the probability of $E_R$ going past this value decays super-exponentially with $(E_R-R)$\,---\,see Appendix~\ref{appx:est_rel_err}. This allows us to bound the expected complexity of Algorithm~2 in Appendix~\ref{app:rel_err_complexity} and obtain a significant improvement over the naive approach.

\section{Correctness}
\label{sec:CC}

Algorithm 1.A obtains $K$ estimates of $\Tr[\bra 0_{m'}\tilde G^\dagger (W_H)^k \tilde G \ket 0_{m'}]$, each within additive error
$\varepsilon=\eabs/(2e^\beta)$ and probability at least $(1-\delta_0)$. It follows that, with probability at least $(1-\delta_0)^K \ge (1-\delta)$,
\begin{align}
\nonumber
    \ | & \hat \cY - \Tr[S_K]|    \\ 
    & \le 2  \sum_{k=1}^K I_k(\beta) \left|\hat \chi_k -\Tr [\bra 0_{m'}\tilde G^\dagger (W_H)^k \tilde G \ket 0_{m'}] \right| \\
    & \le \eabs/2 \;,
\end{align}
where we used $\sum_{k=1}^K I_k(\beta) \le e^\beta/2$.
In addition, Lemma~\ref{lem:finitecheby} 
implies $ |\Tr[S_K] - \cZ| \le \eabs/2$ 
and thus $|\hat \cY - \cZ| \le \eabs$ with probability at least  $(1-\delta)$.
We can then choose $\hat \cY$ as the estimate $\hat \cZ$ for the partition function in all cases.
However, if $\hat \cY>\cZ_{\max}$ or $\hat \cY<0$, we can set $\hat \cZ=\cZ_{\max}$ or $\hat \cZ=0$ respectively, and still satisfy
\begin{align}
\label{eq:Alg1correct}
    |\hat \cZ - \cZ| \le \eabs \;,
\end{align}
with probability at least $1-\delta$.  

Algorithm 1.B obtains $J$ estimates of 
$\Tr [\bra 0 _{m'} W'_{t_j}\ket0_{m'}]$, each within additive error 
$\varepsilon=\eabs/4$ and error probability $\delta_0$. It follows that, with probability at least $(1-\delta_0)^J \ge (1-\delta)$,
\begin{align}
\nonumber
    \ | & \hat \cY - \Tr[\bra{0}_{m'}X'_J \ket 0_{m'}]|    \\ 
    & \le 2 \delta y/(\sqrt{2 \pi})  \sum_{j=1}^J e^{-y_j^2/2} \left|\hat \chi_j -\Tr [\bra 0_{m'} W'_{t_j} \ket 0_{m'}] \right| \\
    & \le \eabs/2 \;,
\end{align}
where we used $ 2 \delta y/(\sqrt{2 \pi})  \sum_{j=1}^J e^{-y_j^2/2} \le 2$.
In addition, Appendix~\ref{app:unitapprox} and Lemma~\ref{lem:discreteHTS} imply $|\Tr[\bra{0}_{m'}X'_J \ket 0_{m'}]-\Tr[X_J]| \le \eabs/4$ and $|\Tr[X_J]-\cZ| \le \eabs/4$, so that 
$|\Tr[\bra{0}_{m'}X'_J \ket 0_{m'}] -\cZ| \le \eabs/2$.
Thus $|\hat \cY - \cZ| \le \eabs$ and Eq.~\eqref{eq:Alg1correct} is satisfied
with probability at least $1-\delta$.

Finally, the proof of the correctness of Algorithm 2 follows directly from the analysis in Appendix \ref{appx:est_rel_err}. 
The main observations are that the final $\eabs(r)$ is sufficient for the desired relative precision
and $\Pi_{r\ge 1} (1-\delta'(r)) \ge (1-\delta)$.
This algorithm returns an estimate $\hat \cZ$ that satisfies
\begin{align}
      | \hat \cZ -  \cZ   | \le \erel  \cZ\;,
\end{align}
with probability at least $(1-\delta)$.


\section{Complexity}
\label{sec:complexity}

\subsection{Additive error} 
The complexity of the algorithms can be determined
from the total number of uses of the trace-estimation algorithm 
and the complexity of each use. 
The operation $V(k)$ in Algorithm 1.A uses $W_H$ at most $K=O(m+\beta + \log_2(1/\eabs))$ times and the gate complexity of $W_H$ is $O(LC_H)$ as discussed in Sec.~\ref{sec:blockencoding}.
The gate complexity of $V(k)$ is then
\begin{align}
\label{eq:encodingLCU}
    C_V = O \left( LC_H (m+\beta+\log(1/\eabs))  \right) \;.
\end{align}
Algorithm 1.B requires computing $\Tr [\bra 0_{m'} W'_t \ket 0_{m'}]$ within additive precision $\Theta(\eabs)$. 
The largest time is $t_J := J \delta y \sqrt{2\beta}=O (\sqrt {\beta(m + \log_2(1/\eabs))}\,)$. 
The gate complexity of $V'(t)$ is determined by that of $W'_t$ 
and is
\begin{align}
\label{eq:encodingFF}
    C_{V'}= O \Big(& L C_H  \sqrt {m+ \log(1/\eabs)}\nonumber \\ 
    &\times\left(\sqrt\beta + \sqrt {m+ \log(1/\eabs)}\right) \Big) \;.
\end{align}

For given $\eabs$ and $\delta$, Algorithm 1.A uses the trace estimation algorithm $Q.K$ times, while
Algorithm 1.B uses it $Q.J$ times, for proper choices of $Q$, $K$, and $J$ given by Lemmas~\ref{lem:Hoef},~\ref{lem:finitecheby} and~\ref{lem:discreteHTS} respectively. Moreoever, the number of ancilla qubits needed to perform the trace estimations in either case is $O(\log_2 L)$. 
Using the results of Secs.~\ref{sec:DQC1} and~\ref{sec:LCU} and Eqs.~\eqref{eq:encodingLCU} and~\eqref{eq:encodingFF}, the respective asymptotic complexities are given by
\begin{align}
\label{eq:1Acomp}
    T_{1A} &= \tilde O \Bigg(\frac{e^{2\beta} M^2}{\eabs^2}\log(1/\delta)L^3C_H(\beta^2+m^2)\Bigg) \;, \\
\label{eq:1Bcomp}
    T_{1B} &= \tilde O \Bigg(\frac{ M^2}{\eabs^2} \log(1/\delta){L^3 C_H }m(\beta+m)\Bigg) \;,
\end{align}
where the $\tilde O$ notation hides factors that are polylogarithmic in $1/\eabs$, $\beta$ and $m$.

Note that the assumptions on $H$ in Sections~\ref{sec:blockencoding} and~\ref{sec:SGA} are closely related. Indeed, the two cases can be connected via simple transformations, such as adding or subtracting an overall constant to and scaling the Hamiltonian. These transformations can change $\cZ$ by an overall constant factor that may be exponentially large or small in $\beta$, potentially incurring in complexity overheads.
Hence the main deciding factor for choosing between Algorithms 1.A and 1.B is the specification of the Hamiltonian, which affects the complexity directly via $C_V$ and $C_{V'}$ in Eqs.~\eqref{eq:encodingLCU} and~\eqref{eq:encodingFF} respectively. The specification of $H$ also determines the $L1$-norm of the coefficients in the decomposition of $H$, which affects the complexities through their dependence on $\beta$. This is because we renormalize $H$ by the $L1$-norm at the outset\,---\,see Sec.~\ref{sec:PFP}.

\subsection{Relative error}
Algorithm 2 calls $\textsc{EstimatePF}$ with variable absolute error $\eabs(r)$ and error probability $\delta'(r)$ for $r=1,...,R$, and the complexity of each call can be obtained from Eqs.~\eqref{eq:1Acomp} and~\eqref{eq:1Bcomp}).
The number of times Algorithm 2 uses $\textsc{EstimatePF}$ is a random variable and hence the complexity of any one instance of Algorithm 2 is a random number. However, the probability of requiring more than the expected number of uses of $\textsc{EstimatePF}$ decays super-exponentially. This allows us to obtain bounds on the expected complexity of Algorithm 2 as:
\begin{align}
    T_{2A} &= \tilde O \left( \frac{e^{2\beta} M^2 }{\erel^2 \cZ^2} {\log(1/\delta)L^3 C_H  }\beta^3 (\beta^2+m^2)  \right) \;, \label{eq:Alg2Acost}\\
    T_{2B} &= \tilde O \left( \frac{ M^2 }{\erel^2 \cZ^2} {\log(1/\delta)L^3 C_H  }  m(\beta+m)  \right) \;, \label{eq:Alg2Bcost}
\end{align}
where we dropped terms that are polylogarithmic in $1/\erel$, $\beta$, $m$ and $M/\cZ$\,---\,see Appendix~\ref{app:rel_err_complexity} for details.

The dominating factor in $T_{2A}$ is $O((M e^\beta/(\erel \cZ ))^2L^3)$ and in $T_{2B}$ is $O((M /(\erel \cZ ))^2L^3)$. While the latter appears to be exponentially smaller in $\beta$, the partition function may also be exponentially smaller under the condition $H \ge 0$ in this case.


\section{DQC1 completeness}
\label{sec:complete}

In Ref.~\cite{Bra08} it was shown that, under certain conditions, estimating 
the partition function to within additive error is DQC1-hard, i.e., any problem in DQC1 can be efficiently reduced  to that of estimating the partition function.
We now show that such a version of the partition function problem is in fact DQC1-complete\,---\,our method provides a polynomial-time algorithm in the one clean qubit model.
In the following, $\lambda_{\min}(A)$ refers to the lowest eigenvalue of a Hermitian operator $A$.

\begin{definition}[PFP-additive~\cite{Bra08}]\label{def:PFP_add}
    We are given a Hamiltonian $\tilde H$ acting on $m$ qubits, $\tilde H=\sum_{l=1}^{\tilde L} \tilde h_l$, and three real numbers $\tilde \beta >0$, $\delta<1$, and $\epsilon >0$.
   Each $\tilde h_l$ acts on at most $k$ qubits and has bounded operator norm $\|\tilde h_l\| =O( \poly(m))$.
   We are also given a lower bound $\lambda$ to the ground state energy of $\tilde H$, i.e. $\lambda \le \lambda_{\min}(\tilde H)$.
   The goal is to find a number $\hat \cY$ such that, with probability at least $(1-\delta)$,
    \begin{align}
    \label{eq:PFPadditive}
        \left|\hat \cY -\frac{\tilde \cZ}{2^m e^{-\tilde \beta \lambda}} \right| \leq \epsilon \;,
    \end{align}
    where $\tilde \cZ = \Tr[e^{-\tilde \beta \tilde H}]$.
\end{definition}
Our result in this section is:
\begin{theorem}
\label{thm:PFP-additive}
    PFP-additive is \dqc1-complete for $\tilde L=O(\poly(m))$, $\tilde \beta =O( \poly(m))$,  $\delta$ a constant such that  $0<\delta<1/2$, $\epsilon=\Omega(1/\poly(m))$, $k=O(\log(m))$, and $\lambda = \sum_{l=1}^{\tilde L}\lambda_{\min}(\tilde h_l)$.
\end{theorem}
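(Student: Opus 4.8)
The statement asserts two things, and only one is new. DQC1-hardness for this exact parameter regime is established in Ref.~\cite{Bra08}, so the plan is to prove \emph{membership}, i.e.\ to exhibit a one-clean-qubit algorithm running in time $\poly(m)$. The key realization is that PFP-additive is nothing but a \emph{normalized} partition-function estimate for a positive semi-definite Hamiltonian, which Algorithm 1.B is built to handle. First I would absorb the scalar shift: since $\lambda=\sum_l\lambda_{\min}(\tilde h_l)$ is a number,
\begin{align}
\frac{\tilde\cZ}{2^m e^{-\tilde\beta\lambda}}
= \frac{\Tr[e^{-\tilde\beta(\tilde H-\lambda\one)}]}{2^m}
= \frac{\Tr[e^{-\tilde\beta H_0}]}{M}\;,
\end{align}
with $M=2^m$ and $H_0:=\tilde H-\lambda\one=\sum_{l=1}^{\tilde L}(\tilde h_l-\lambda_{\min}(\tilde h_l)\one)$. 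Each summand is $k$-local and positive semi-definite, so $H_0\ge 0$; this is the structural feature that the normalization $e^{-\tilde\beta\lambda}$ silently builds in, and it is what will make the whole argument go through.

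Next I would cast $H_0$ in the input format of Algorithm 1.B, a positively weighted sum of projectors. Diagonalizing each $k$-local term (a $2^k\times 2^k$ matrix, with $2^k=\poly(m)$ since $k=O(\log m)$) gives $H_0=\sum_{l,i}\mu_i^{(l)}P_i^{(l)}$ with $\mu_i^{(l)}\ge 0$ and projectors $P_i^{(l)}$. The number of terms is $L\le\tilde L\,2^k=\poly(m)$, each implementable with $C_H=\poly(m)$ gates; and since $\mu_i^{(l)}\le\|\tilde h_l-\lambda_{\min}(\tilde h_l)\one\|=O(\poly(m))$, the $L1$-norm is $\alpha=\sum_{l,i}\mu_i^{(l)}=\poly(m)$. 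Renormalizing as in Sec.~\ref{sec:PFP}, I set $H=H_0/\alpha$ (so $\|H\|\le 1$ and $H\ge 0$) and $\beta=\tilde\beta\,\alpha$; because $\tilde\beta=O(\poly(m))$ and $\alpha=\poly(m)$, the rescaled inverse temperature is $\beta=\poly(m)$, and the target becomes $\cZ/M$ with $\cZ=\Tr[e^{-\beta H}]$. This is exactly what Eq.~\eqref{eq:secondappx} produces after dividing by $M$, with $\cZ_{\max}=M$.

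The central complexity observation is that, unlike the \emph{unnormalized} estimate whose cost in Eq.~\eqref{eq:1Bcomp} carries the exponential factor $M^2$, the normalized quantity $\cZ/M$ is assembled from the bounded expectations $\langle\sigma_x\rangle_j=\Re\Tr[\bra{0}_{m'}W'_{t_j}\ket{0}_{m'}]/M\in[-1,1]$ that the trace-estimation circuit measures directly. Each is estimated to additive error $\Theta(\epsilon)$ in $Q=O(\epsilon^{-2}\log(J/\delta))$ runs by Hoeffding, with no $2^{2n}$ prefactor. The discretization is controlled by taking the trace-norm tolerance $\eabs=\Theta(\epsilon M)$ in Lemma~\ref{lem:discreteHTS}, whereupon the identity $m+\log_2(1/\eabs)=\log_2(1/\epsilon)=O(\log m)$ (valid because $\epsilon=\Omega(1/\poly(m))$) collapses every $\poly(1/\eabs)$ dependence to $\poly(m)$: one gets $J=\poly(m)$, largest evolution time $t_J=O(\sqrt{\beta\log(1/\epsilon)})=\poly(m)$, per-circuit gate count $C_{V'}=\poly(m)$, and $m'=O(\log L)=O(\log m)$ ancillas, so each circuit acts on $m+2m'+1=\poly(m)$ qubits. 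The total $J\cdot Q=\poly(m)$ circuit runs, plus $\poly(m)$ classical pre- and post-processing (which lies in $\mathrm{BPP}\subseteq$ DQC1), give a $\poly(m)$-time one-clean-qubit algorithm; a union bound with $\delta_0=\delta/J$ secures success probability $\ge 1-\delta$, and since the failure probability enters only through $\log(1/\delta_0)$ it can be driven below the DQC1 threshold at $\poly(m)$ cost.

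The main obstacle is appreciating why Algorithm 1.B, and not 1.A, must be used: the Chebyshev route carries a factor $e^{2\beta}$ in Eq.~\eqref{eq:1Acomp}, which for $\beta=\poly(m)$ is exponential, whereas Algorithm 1.B is polynomial in $\beta$ at the price of requiring $H\ge 0$. The entire reduction therefore hinges on recognizing that the PFP-additive normalization $e^{-\tilde\beta\lambda}$ with $\lambda=\sum_l\lambda_{\min}(\tilde h_l)$ is precisely the shift rendering the Hamiltonian positive semi-definite, which unlocks the poly-in-$\beta$ algorithm. The remaining labor is the routine but careful bookkeeping verifying that each of $L$, $\alpha$, $\beta$, $J$, $t_J$, $Q$, and $m'$ stays $\poly(m)$ under the stated constraints; combined with the hardness of Ref.~\cite{Bra08}, this establishes DQC1-completeness.
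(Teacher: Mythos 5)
Your proposal is correct and takes essentially the same route as the paper's proof: hardness is imported from Ref.~\cite{Bra08}, and membership follows by shifting $\tilde H$ by $\lambda$ to obtain a positive semi-definite Hamiltonian, classically diagonalizing the $O(\log m)$-local terms into a $\poly(m)$-sized positively weighted sum of projectors, renormalizing by the $L1$-norm, and running Algorithm 1.B with $\eabs=\epsilon M$ so that the $M^2/\eabs^2$ factor in Eq.~\eqref{eq:1Bcomp} collapses to $1/\epsilon^2$, exactly as the paper does. One harmless imprecision: the measured expectation is $\langle\sigma_x\rangle=\Re\Tr[\bra{0}_{m'}W'_{t_j}\ket{0}_{m'}]/2^{m+m'}$ rather than $/M$, so the Hoeffding bound carries a $2^{2m'}=O(L^2)$ prefactor, which remains $\poly(m)$ since $m'=O(\log m)$, leaving your conclusion unaffected.
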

\begin{proof}
     
    Reference~\cite{Bra08} shows that PFP-additive is DQC1-hard under the stated conditions. 
    It remains to be shown that this problem is indeed in the DQC1 complexity class; we will prove this using Algorithm 1.B.
    
    Let $ H := (\tilde H - \lambda)/\gamma$, where $\gamma:=2 \sum_{l=1}^{\tilde L} \|\tilde h_l\|$, and $ \beta: = \tilde \beta \gamma$ such that $ H \ge 0$ and $\| H \| \le 1$.
    Moreover, the estimate $\hat \cY$ in Eq.~\eqref{eq:PFPadditive} is equivalent to an estimate of $\cZ$ within additive error $\eabs=\epsilon M$, since $ \cZ = \tilde \cZ e^{\tilde \beta \lambda}$. Given access to evolutions under a Hamiltonian $H'$ that satisfies Eq.~\eqref{eq:H'def},
    we can obtain $\hat \cY$. 
        
    We now show how to construct $H'$ and approximate its evolution operator
    in  time polynomial in $\tilde L$ and $2^k$; that is, 
    time polynomial in $m$ under the assumptions. 
    First, we classically obtain a matrix representation for each $\tilde h_l$, whose dimensions are, at most,
    $2^k \times 2^k$.
    We also compute $\lambda_{\min}(\tilde h_l)$ and $\|\tilde h_l\|$ for each $l$, and obtain $\gamma$. 
    Next, we construct the operators or matrices $h_l = (\tilde h_l - \lambda_{\min}(\tilde h_l))/\gamma$ and note that
    $H = \sum_{l=1}^{\tilde L} h_l$, with $h_l \ge 0$. We proceed with the spectral decomposition of each $h_l$
    and write it as $h_l = \sum_{d=1}^{2^k} \bar{\alpha}_l^d H_l^d $, where $\bar{\alpha}_l^d\ge0$, each $H_l^d$ satisfies $(H_l^d)^2=H_l^d$ and acts on, at most, $k$ qubits.
    
    Using this decomposition and consolidating indices, we can write $H = \sum_{l'=1}^{L} \alpha_{l'} H_{l'}$, where $\alpha_{l'} \ge 0$ and the $H_{l'}$ are projectors acting on at most $k$ qubits. It follows that each unitary $2 H_{l'} - \one_m$ can be implemented using $O(\poly(2^k))$ quantum gates, i.e., has complexity $C_H=O(\poly(m))$.
    We also note that $L=2^k \tilde L=O(\poly(m)) $ and $\alpha=\sum_{l'=1}^L \alpha_{l'}= O(\poly(m))$. Lastly, we renormalize $H\gets H/\alpha$ and $\beta\gets \beta\alpha$ in order to fit the framework of Algorithm 1.B. The complexity of all classical steps is $O(\poly(m))$, and the inverse temperature after renormalizing $H$ (i.e.,  $\beta\gets \beta\alpha$) is $O(\poly(m))$. 
    
    We may now use Algorithm 1.B to return an estimate of $\cZ$ within additive error $\epsilon M$, which solves PFP-additive. The complexity analysis is essentially identical to that used for computing $T_{1B}$ in Equation~\eqref{eq:1Bcomp}. The overall gate complexity of Algorithm 1.B to obtain $\hat \cY$  turns out to be $\tilde O(\beta m^2 L^3 C_H/\epsilon^2)$, which is $O(\poly(m))$ under the stated conditions.

    In summary, we showed that the complexity of all steps to obtain $\hat \cY$
    is polynomial in $m$ and thus PFP-additive is in the DQC1 complexity class.
    
     \end{proof}


\section{Discussions}
\label{sec:conc}

We provided a method to compute partition functions of quantum systems in the one clean qubit model. 
For a given relative precision and probability of error, and when the Hamiltonian is positive semi-definite, the complexity of our method is almost linear in $(M/\cZ)^2$.
Our algorithm can outperform classical methods whose complexity is polynomial in $M$ whenever $\cZ$ is sufficiently large. However, in the general case, our method may be inefficient with complexity that scales polynomially in $M$, which is expected due to the hardness of estimating partition functions.
As for any algorithm that computes $\cZ$, this can be a drawback in an implementation when $M \gg 1$.
 
For our result, we developed a classical algorithm for attaining an estimate of a quantity within desired relative precision based on a sequence of approximations within predetermined additive errors. This result is applicable in a fairly general setting, and could be of independent interest, e.g., in other DQC1 algorithms \cite{CM18,subramanian2019quantum}.

We also showed that, under certain constraints on the inverse temperature and Hamiltonian, the problem
of estimating partition functions within additive error is complete for the DQC1 complexity class.
This result suggests that no efficient classical algorithm for this problem exists,
while our method is efficient for those instances. It also demonstrates the power 
of the one clean qubit model for solving problems of relevance in science.

Several simple variants of our method may be considered. For example, instead of estimating each of the traces appearing in Eqs.~\eqref{eq:firstappx} and~\eqref{eq:secondappx} and computing the linear combination, we could sample each unitary with a probability that is proportional to the corresponding coefficient. We could also aim at improving our error bounds by avoiding the union bound and noticing that $\cZ$ is ultimately obtained by sampling independent $\pm 1$ random variables. However, these improvements may not reduce the complexity significantly. The reason is that we use fairly efficient approximations to $e^{-\beta H}$, where the number of terms in the corresponding linear combinations have a mild dependence on the relevant parameters of the problem.

Quantum algorithms for partition functions that have complexity almost linear in $\sqrt{M/\cZ}$ exist (cf.~\cite{PW09,CS16}), but but are formulated in the standard circuit model of quantum computation and thus require a number of pure qubits that is $\Theta(m)$.
Nevertheless, achieving a scaling that is almost linear in $\sqrt{M/\cZ}$ in the one clean qubit model  would imply a quadratic quantum speedup for unstructured search under the presence of oracles. Such a speedup is ruled out from a theorem in Ref.~\cite{KL98}.

\section{Acknowledgements}
We thank the anonymous reviewers whose comments have helped improve this manuscript.
ANC thanks David Poulin for helpful discussions. ANC acknowledges the Center for Quantum Information and Control, University of New Mexico and the Theoretical Division, Los Alamos National Laboratory where a part of this work was done. This work was supported by the Laboratory Directed Research and Development program of Los Alamos National Laboratory and by the U.S. Department of Energy, Office of Science, Office of Advanced Scientific Computing Research, Quantum Algorithms Teams and Accelerated Research in Quantum Computing programs. Los Alamos National Laboratory is managed by Triad National Security, LLC, for the National Nuclear Security Administration of the U.S.
Department of Energy under Contract No. 89233218CNA000001.

\appendix

\section{Proof of Lemma~\ref{lem:Hoef}}
\label{app:prooflemma1}

Let $X_1,\ldots,X_{Q}$ be a set of independent and identically distributed 
random variables and $X_i \in \{1,-1\}$. These variables
can be associated with the outcomes of the $Q$ projective measurements of $\sigma_x$.
Let $s_x:=\sum_i X_i/Q$. According to Hoeffding's inequality~\cite{Hoe63}, 
\begin{align}
\label{eq:Hoef}
    \Pr(|s_x- \langle \sigma_x \rangle| \ge t) \le 2 \exp \left(- \frac { t^2 Q}{2} \right) \;,
\end{align}
where $\langle \sigma_x \rangle$ is also the expected value of each $X_i$. 
Our estimate is the random variable $\hat \xi_V = 2^n s_x$.
 We can use Eq.~\eqref{eq:Hoef} to obtain
\begin{align}
   & \Pr(| \hat \xi_V - 2^n\langle \sigma_x  \rangle| \ge \varepsilon) \le 2 \exp \left(- \frac {\varepsilon^2 Q} {2^{2n+1}} \right) \;.
\end{align}
Then, it suffices to choose 
\begin{align}
    Q = \left\lceil \frac {2^{2n+1}} {\varepsilon^2} \log\left( \frac 2 {\delta_0}\right) \right \rceil
\end{align}
to satisfy $ \Pr(| \hat \xi_V -2^n \langle \sigma_x \rangle| \ge \varepsilon) \le \delta_0$
or, equivalently,
\begin{align}
     \Pr(| \hat \xi_V - 2^n \langle \sigma_x \rangle| < \varepsilon) \ge (1-\delta_0) \;.
\end{align}
Last, we note that $2^n \langle \sigma_x \rangle = \Re \Tr[V]$.

\section{Approximation of the exponential operator in terms of Chebyshev polynomials}
\label{app:chebyapprox}

Let $I_k(z)$ be the modified Bessel function of the first kind.
The generating function is $ e^{-z \cos q }=  \sum_{k=-\infty}^\infty I_k(-z) e^{ikq}$,
which implies
\begin{align}
\label{eq:expcheb}
    e^{-\beta H} &
     =  \sum_{k=-\infty}^\infty (-1)^k I_k(\beta) T_k(H) 
    \; ,
\end{align}
where $T_k(x)$ is the $k$-th Chebyshev polynomial of the first kind.
Equation~\eqref{eq:expcheb} was obtained using  
$I_k(z)=I_{-k}(z)$, $T_k(x)=T_{-k}(x)=\cos(k \arccos{x})$, $I_k(-z)=(-1)^k I_k(z)$,
and $T_k(-x)=(-1)^k T_k(x)$. 

\subsection{Proof of Lemma~\ref{lem:finitecheby}}

We wish to approximate the exponential
operator by a finite sum of Chebyshev polynomials in $H$.
For $|x|\le 1$, $|T_k(x)| \le 1$
and thus $\|T_k(H)\|_1 \le M$. This implies
\begin{align}
\label{eq:appS_K}
   \| S_K - e^{-\beta H} \|_1 \le M \sum_{|k|>K}  |I_k(\beta)| \;,
\end{align}
where $S_K = \sum_{k=-K}^K (-1)^kI_k(\beta)T_k(H)$.

To bound the right hand side of Eq.~\eqref{eq:appS_K}, we note that for $k \ge 0$ the following holds
\begin{align}
    I_k(\beta)&  = \left( \frac \beta 2 \right)^ k \sum_{r=0}^\infty
    \frac{(\beta^2/4)^r}{r! (r+k)!} \\
    & \le \left( \frac \beta 2 \right)^ k \frac {I_0(\beta)} {k!}  \;,
\end{align}
where we used $(r+k)! \ge k! r!$.
Additionally, $k! >(k/e)^k$
and $I_0(\beta) = \int_0^\pi e^{\beta \cos \theta} d \theta /\pi \le e^\beta$.
It follows that
\begin{align}
   \| S_K - e^{-\beta H} \|_1 & \le M e^\beta 2 \sum_{k > K} \left(\frac {\beta e}{2K} \right)^k \;.
\end{align}
Assume $K \ge \beta e$ and $K \ge 1$ in general, and $\beta \ge 0$. Then,
\begin{align}
\label{eq:SKbound}
     \| S_K - e^{-\beta H} \|_1 & \le M e^\beta  \frac 1 {2^{K-1}} \;.
\end{align}
To bound the right hand side by $\eabs/2$ we choose
\begin{align}
    K = \left \lceil m + e \beta + \log_2 (1/\eabs) +2 \right \rceil \;.
\end{align}
It is easy to show that both assumptions, $K \ge \beta e$ and $K \ge 1$, are satisfied
with this choice.

\section{Approximation of the exponential operator as a linear combination of unitaries}
\label{app:unitapprox}
For $\beta \ge 0$, the Fourier transform of the Gaussian gives
\begin{align}
\label{eq:FTGaussian}
    e^{-\beta x^2} = \frac 1 {\sqrt{2\pi}} \int dy \; e^{-y^2/2} e^{-i y \sqrt{2\beta} x} \;.
\end{align}
This formula can be used to obtain the Hubbard-Stratonovich transformation: if $x^2=\lambda \ge 0$
corresponds to the eigenvalue of $H$, then we can
replace $x^2$ by $H$ in Eq.~\eqref{eq:FTGaussian} and obtain
Eq.~\eqref{eq:HST}.

\subsection{Proof of Lemma~\ref{lem:discreteHTS}}

The Poisson summation formula implies
\begin{align}
    \frac {\delta y}{\sqrt{2\pi}} \sum_{j=-\infty}^\infty e^{-y_j^2/2} e^{-iy_j\sqrt{2\beta \lambda}} = \sum_{k=-\infty}^\infty e^{-\omega_k^2/2} \; ,
\end{align}
where $\delta y > 0$, $y_j=j \delta y$, and $\omega_k=2\pi k/\delta y + \sqrt{2\beta \lambda}$.   
Let us choose $\delta y$ and $1 \ge \epsilon >0$ so that
\begin{align}
\label{eq:deltaycondition}
    (2\pi/\delta y) \ge \sqrt{2 \beta \lambda} + \sqrt{2\log(5/\epsilon)} \;.
\end{align}
Then, for $|k| \ge 1$, we have $\omega_k^2 \ge k^2 2 \log(5/\epsilon)$,
where we considered the worst case in which $\lambda=0$. We obtain
\begin{align}
    \sum_{k \ne 0} e^{-\omega_k^2/2} &\le 2  \sum_{k =1}^\infty e^{-k^2 \log(5/\epsilon)} \\
    & \le 2  \sum_{k =1}^\infty (\epsilon/5)^k \\
    & = 2 \frac{\epsilon/5}{1-\epsilon/5} \le \epsilon/2 \;.
\end{align}
It follows that
\begin{align}
    \left | \frac {\delta y}{\sqrt{2\pi}} \sum_{j=-\infty}^\infty e^{-y_j^2/2} e^{-iy_j\sqrt{2\beta \lambda}} - e^{-\beta \lambda}\right | \le \epsilon/2 \;.
\end{align}
Moreover, we can choose $J$ such that
\begin{align}
\label{eq:yJcondition}
    y_J \ge \sqrt{6 \log(2/\epsilon)} \ge 2 \;,
\end{align}
and
\begin{align}
    \frac {\delta y}{\sqrt{2\pi}} \sum_{|j|>J} e^{-y_j^2/2} &\le \int_{y_J}^\infty dy \; e^{-y^2/2} \\
    & \le \int_{y_J}^\infty dy \; e^{-y_J .y/2} \\
    & = (2/y_J) e^{-y_J^2/2} \\
    & \le (\epsilon/2)^3 \le \epsilon/2 \;.
\end{align}
In particular, we can choose $ \delta y = ( \sqrt{\beta} + \sqrt{ \log(5/\epsilon)})^{-1} $
and $J = \lceil 3  (\sqrt{\beta} + \sqrt{\log(5/\epsilon)}) \sqrt{\log(5/\epsilon)} \rceil$
so that
\begin{align}
\label{eq:HSTapprox3}
\left | \frac {\delta y}{\sqrt{2\pi}} \sum_{j=-J}^J e^{-y_j^2/2} e^{-iy_j\sqrt{2\beta \lambda}} - e^{-\beta \lambda}\right | \le \epsilon \;.
\end{align}
Note that we assumed $0 \le \lambda \le 1$.
Larger values of 
$J$ and/or a smaller values of $\delta y$
will also imply Eq.~\eqref{eq:HSTapprox3}.
Lemma~\ref{lem:discreteHTS} then follows
from replacing $\lambda$ by $H$ and $\epsilon$
by $\eabs/(4M)$ in Eq.~\eqref{eq:HSTapprox3}:
\begin{align}
\label{eq:HSTapprox4}
   \left \| \frac {\delta y}{\sqrt{2\pi}} \sum_{j=-J}^J e^{-y_j^2/2} e^{-iy_j\sqrt{2\beta H}} - e^{-\beta H} \right \|_1 & \le  M\epsilon \\
   \nonumber
   & \le \eabs/4 \;.
\end{align}
We can simplify the expressions for $\delta y$ and $J$
using $\log(20 M/\eabs) \le 4( m + \log_2(1/\eabs))$, where $m=\log_2 (M)\ge 1$.
In particular, we can choose 
\begin{align}
\delta y &= \left( 2 (\sqrt \beta + \sqrt{m+\log_2(1/\eabs)}) \right)^{-1} \; , \\
    J & =\left \lceil 12  (\sqrt{\beta} + \sqrt{m+ \log_2(1/\eabs)} ) \sqrt{m + \log_2(1/\eabs)} \right \rceil\;.
\end{align}

\subsection{Proof of Eq.~\eqref{eq:tracerelation}}

Let $\ket{\psi_\lambda}$ be an eigenvector of $H$ of eigenvalue $\lambda \ge 0$, that is
$H \ket{\psi_\lambda} = \lambda \ket{\psi_\lambda}$.
Then, if $y_j = j \delta y$ and $t_j = j \delta y \sqrt{2\beta}$, we can write $ (X_J \ket{\psi_\lambda}_m) \ket 0_{m'_1}$
using Eq.~\eqref{eq:discreteHTS1} as
\begin{align}
\label{eq:X'Jaction}
    \left( \frac {\delta y}{\sqrt {2 \pi} } \sum_{j=-J}^J e^{-y_j^2/2}
    e^{-i t_j\sqrt{ \lambda}} \ket{\psi_\lambda}_m \right) \ket 0_{m'_1}
   \;.
\end{align}
If $\lambda=0$, the Hamiltonian $H'$ of Eq.~\eqref{eq:H'def}
has $\ket{\psi_\lambda}_m \ket 0_{m'_1}$ as eigenvector of eigenvalue 0.
Otherwise, $H'$ leaves the subspace spanned by $\{\ket{\psi_\lambda}_m \ket 0_{m'_1} , H' \ket{\psi_\lambda}_m \ket 0_{m'_1}\}$ invariant. We let
$\sket{\psi_\lambda^\perp}$
be the normalized state in this subspace that is orthogonal to $\ket{\psi_\lambda}_m \ket 0_{m'_1}$.
The two-dimensional representation of $H'$ is
\begin{align}
    H'_\lambda = \begin{pmatrix} a_\lambda & b_\lambda \cr b_\lambda & c_\lambda \end{pmatrix} \; .
\end{align}
With no loss of generality, $a_\lambda,b_\lambda,c_\lambda \in \mathbb R$.
According to Eq.~\eqref{eq:H'def}, $H'_\lambda$ must satisfy
\begin{align}
    (H'_\lambda)^2=
     \begin{pmatrix} \lambda & 0 \cr 0 & \gamma \end{pmatrix} \; ,
\end{align}
where $\gamma \ge 0$. It follows that $(a_\lambda^2 + b_\lambda^2)=\lambda$, and 
either $b_\lambda=0$ or $(a_\lambda + c_\lambda)=0$. In the first case,
$\ket{\psi_\lambda}_m \ket 0_{m'_1}$ is an eigenvector of $H'$ with  eigenvalue
$\pm \sqrt \lambda$. In the second case, $\gamma=\lambda$ and
$H'$ has two eigenvectors with distinct eigenvalues $\pm \sqrt \lambda$.
Thus, in general, $\ket{\psi_\lambda}_m \ket 0_{m'_1} =\alpha_+\ket{\psi^+_\lambda}
+ \alpha_- \ket{\psi^-_\lambda}$, where $\ket{\psi^\pm_\lambda}$
are the eigenvectors of $H'$ of eigenvalues $\pm \sqrt \lambda$, respectively.
Let $W_t:=e^{-iH't}$ and 
\begin{align}
\tilde X_J : = \frac {\delta y}{\sqrt {2\pi}} \sum_{j=-J}^J e^{-y_j^2/2} W_{t_j} \; .
\end{align}
We obtain
\begin{align}
    \nonumber
    \tilde X_J & \ket{\psi_\lambda}_m  \ket 0_{m'_1}  = \alpha_+ \frac {\delta y}{\sqrt {2\pi}} \sum_{j=-J}^J e^{-y_j^2/2}
    e^{-it_j\sqrt{ \lambda}} \ket{\psi^+_\lambda} + \\
    &+ \alpha_- \frac {\delta y}{\sqrt {2 \pi}} \sum_{j=-J}^J e^{-y_j^2/2}
    e^{it_j\sqrt{ \lambda}}\ket{\psi^-_\lambda} \\
    & = \frac {\delta y}{\sqrt {2 \pi}} \sum_{j=-J}^J e^{-y_j^2/2}
    e^{-it_j\sqrt{ \lambda}} \ket{\psi_\lambda}_m \ket 0_{m'_1} \\
    \label{eq:XJaction}
    & = (X_J\ket{\psi_\lambda}_m) \ket 0_{m'_1}\;.
\end{align}
We used the property that the sums are invariant under the transformation
$y_j \rightarrow -y_j$, together with Eq.~\eqref{eq:X'Jaction}. 
Then,
\begin{align}
    \Tr[X_J] &= \sum_\lambda \bra {\psi_{\lambda}} X_J \ket {\psi_{\lambda}} \\
    & = \sum_\lambda \bra {\psi_{\lambda}} \bra 0_{m'_1} \tilde X_J \ket{\psi_{\lambda}} \ket 0_{m'_1}\\
    & = \Tr [\bra 0_{m'_1} \tilde X_J \ket 0_{m'_1}] \;. \label{eq:trX_J}
\end{align}

Let $W'_{t_j}$ be a unitary operator acting on $m+m'$ qubits that approximates $W_{t_j}$ as in Eq.~\eqref{eq:W'tcondition}.
Using $(2\delta y/\sqrt{2\pi})\sum_{j=1}^J e^{-y_j^2/2} \le 2$, we obtain
\begin{align}
\nonumber
 | &\Tr [\bra  0_{m'_1}  \tilde X_J \ket 0_{m'_1}]   -  \Tr [\bra 0_{m'}  X'_J \ket 0_{m'}] | \le ( {\delta y / \sqrt {2 \pi}}   )\\
& \times  \sum_{j=-J}^J  e^{-y_j^2/2}
| \Tr [\bra 0_{m'_1} W_{t_j} \ket 0_{m'_1}]   -  \Tr [\bra 0_{m'}  W'_{t_j} \ket 0_{m'}] | \\
& \le 2 ( {\delta y / \sqrt {2 \pi}} ) \sum_{j=1}^J  e^{-y_j^2/2} ( \eabs /8) \\
& \le  \eabs /4 \;.
\end{align}
Together with Eq.~\eqref{eq:trX_J}, this implies Eq.~\eqref{eq:tracerelation}.

\section{Estimation within relative error} 
\label{appx:est_rel_err}

	Our algorithm to estimate the partition function within given relative precision proceeds by making estimations within successively decreasing additive error. The intuition is that once the additive error becomes sufficiently small compared to the estimate obtained, the estimate is correct within a desired relative error. To get a correct relative estimate with high success probability,
	each additive estimation has to be done with decreasing probability of error as discussed below.
	We write $\cX$ for the quantity to estimate and assume $\cX_{\max} \ge \cX > 0$, for known $\cX_{\max}$. Let ${\textsc{Estimate}}(\eabs,\delta',\cX_{\max})$ be a procedure that outputs $\hat \cX$, satisfying $\cX_{\max} \ge \hat \cX \ge 0$
and
\begin{align}
	\Pr(|\hat \cX - \cX | \le \eabs ) \ge 1- \delta' \;,
\end{align}
where $\delta'>0$ is an upper bound on the probability of getting an estimate that is not within desired precision.

We claim that the following algorithm outputs an estimate
	$\hat \cX>0$ such that
	\begin{align}
	\label{eq:alg3output}
	\Pr(|\hat \cX - \cX | \le \erel \cX) \ge 1-\delta \;.
	\end{align}

	\begin{algorithm}[H]
		\renewcommand{\thealgorithm}{}
		\floatname{algorithm}{\hspace{3.3cm} Algorithm 3}
		\setstretch{1.0}
		\vspace{1mm}
		{\bf Input:} $\epsilon_{\rm rel}>0$, $\delta >0$, $\cX_{\max} > 0$. \\
		-- Set $r=0$, $\cX_0=\cX_{\max}$, and $\hat{\cX}_0=0$.\\
		-- While $\cX_r > \hat{\cX}_r$:\\
		\phantom{-- W} $r \leftarrow r+1$ .\\
		\phantom{-- W} Set $\cX_r = \cX_{\max}/2^r$, $\eabs(r) = \erel \cX_r/2$, and \\ 
		\phantom{-- W} $\delta'(r) = \tfrac{6}{\pi^2}\left(\delta/r^2\right)$.\\
		\phantom{-- W} $\hat{\cX}_r = {\textsc{Estimate}}(\eabs(r),\delta'(r),\cX_{\max})$ . \\
		{\bf Output:} $\hat \cX = \hat \cX_r$ .\\
		\vspace{-3mm}
		\caption{}
	\end{algorithm}

	We let $R \in \{1,2,\ldots \}$ be the number of times the procedure ${\textsc{Estimate}}$ is used until Algorithm 3 stops, i.e., the final value of $r$.
	This number is a random variable sampled with 
	probability $\Pr(R)$ and determines the complexity of the algorithm. Its expected
	value depends on $\cX$ and may be large if $\cX \ll \cX_{\max}$, as we discuss below. We also write $\Pr(\hat \cX)$ for the probability of obtaining output (i.e., relative estimate) $\hat \cX$.

    At any step $r$ of Algorithm 3, there is a non-zero probability
    of computing a wrong additive estimate, i.e., an estimate
    $\hat \cX_r$ such that $|\hat \cX_r - \cX| > \eabs(r)$; this can cause the algorithm to stop and report an undesired output. The overall probability of obtaining a correct output $\hat \cX$ then decreases with $r$ and, to bound it from below, we need to decrease $\delta'(r)$ with $r$. 
    
    For obtaining this bound, we consider an infinite sequence of independent estimates $\hat \cX_1,\hat \cX_2, \ldots$, where each $\hat \cX_r$ was obtained using ${\textsc{Estimate}}(\eabs(r),\delta'(r),\cX_{\max})$
    as in Algorithm 3. Let $R$ be the first position in this sequence at
    which $\hat \cX_r \ge \cX_r$ and $\hat \cX=\hat \cX_R$. Since the $\hat \cX_r$
    are uncorrelated, these $R$ and  $\hat \cX$ are the very same random variables as in Algorithm 3, with the same sampling probabilities $\Pr(R)$ and $\Pr(\hat \cX)$, respectively.
    Then, the probability of $\hat \cX$ being correct can be lower-bounded by the probability that every $\hat \cX_1,\hat \cX_2, \ldots$ in the infinite sequence is correct, i.e., the probability that $|\hat \cX_r - \cX| \le \eabs(r)$ for all $r$. This is simply $\prod_{r=1}^\infty \left(1-\delta'(r)\right)$, and we obtain
    \begin{align}
        \prod_{r=1}^\infty \left(1-\delta'(r)\right) &\ge 1 - \delta \left(\tfrac {6}{\pi^2}\right)
        \sum_{r=1}^\infty \frac 1 {r^2} \\
        \label{eq:Alg3confbound}
        & = 1-\delta\;. 
    \end{align}

    Then, Algorithm 3 succeeds and reports a correct outcome
    with probability at least $(1-\delta)$. In that case,
    the output 
    satisfies $|\hat \cX - \cX| \le \eabs(R)$, for the corresponding value of $R$. Since $\eabs(R)=\erel \cX_R/2$ and $\hat \cX \ge \cX_R$  in this case, then $ |\hat \cX - \cX| \le \erel \hat \cX/2$.
	Equivalently,
	\begin{align}
	(1-\erel/2)^{-1} \cX \ge \hat\cX \ge   (1+\erel/2)^{-1} \cX \;,
	\end{align}
	or
	\begin{align}
	| \hat \cX - \cX |& \le  \sum_{j \ge 1} (\erel/2)^j \cX \\ & 
	\label{eq:Alg4correct}
	\le \erel \cX\;.
	\end{align}
	As the condition in Eq.~\eqref{eq:Alg4correct} occurs with probability at least $(1-\delta)$,
	this proves Eq.~\eqref{eq:alg3output} and the correctness of Algorithm 3.

	The properties of $\Pr(R)$ are important
	for determining the complexity of Algorithm 3.
	As one might expect, we will show that this
	probability decays rapidly after a sufficiently
	high value of $R$.
	To this end, we let $q$ be an integer that
	satisfies $q+1 \ge \log_2(\cX_{\rm max}/\cX) + \log_2(3/2)> q$. It follows that $(3/2) \cX_{q+1} \le \cX < (3/2) \cX_q$ and if Algorithm 3 were to be ran
	with $\delta'(r)=0$, then it would stop at $R=q-1$, $R=q$,
	or $R=q+1$ with certainty because $\erel \le 1$.
	In reality, we need to choose $\delta'(r) > 0$ or otherwise the complexity
	of ${\textsc{Estimate}}$ may be large or diverge; however, we will show that $q$ determines the expected value of $R$
	for our choice of $\delta'(r)$, up to an additive constant.

	We now bound $\Pr(R\ge q')$, which is the cumulative probability that Algorithm 3 stops at a step $q'$ or later for $q'\ge q+2$. 
	This is the same as the probability of Algorithm 3 not stopping before $q'$. Since the estimates are uncorrelated, this is given by the product of the probabilities of not stopping at $r=1,2,\ldots,q'-1$. 
	In particular, we can upper bound the probabilities of not stopping at $r\le q$  by 1.
	Once $r \ge q+1$, we have $\cX \ge (3/2) \cX_{q+1} \ge \cX_{r}+ \eabs(r)$ and the estimate $\hat \cX_{r}$ is greater or equal than $\cX_{r}$ with probability at least $(1-\delta'(r))$. Then, the probability of not stopping at this $r$ is at most $\delta'(r)$ and we obtain
	\begin{align}
	    \Pr(R\ge q') &\le \prod_{r=q+1}^{q'-1} \delta'(r)\\
	    &= \left( \tfrac {6 } {\pi^2} \right)^{q'-q-1} \prod_{r=q+1}^{q'-1} \frac{1}{r^2} \\
    	& = \left( \tfrac 6 {\pi^2} \right)^{q'-q-1} \left( \frac{q!}{(q'-1)!}\right)^2 \\
    	& \le \left( \tfrac 6 {\pi^2} \right)^{q'-q-1} \frac 1 {(q'-q-1)!} \;. \label{eq:probsuperexpdecay}
	\end{align}
	Therefore, the probability that Algorithm 3 stops at 
	$R \ge q+2$ decays super-exponentially.
	
    We can use this property of $\Pr(R)$ to bound the expected value of $R$ as follows:
    \begin{align} 
        E_R &=\sum_{R=1}^{\infty} \Pr(R)\, R \\
        & = \sum_{q'=1}^\infty \Pr(R\ge q') \\
        &= \sum_{q'=1}^{q+1} \Pr(R\ge q')+\sum_{q'=q+2}^\infty \Pr(R\ge q') \\
        & \le (q+1) + \sum_{q'=q+2}^\infty \Pr(R\ge q')\;.\label{eq:exptosumprob}
    \end{align}
   Equation~\eqref{eq:probsuperexpdecay} then implies
    \begin{align}
    	\sum_{q'=q+2}^{\infty}&\Pr(R \ge q') \nonumber \\
    	&\le \sum_{q'=q+2}^{\infty}
    	\left( \tfrac 6 {\pi^2} \right)^{q'-q-1} \frac 1
    	{(q'-q-1)!} \\
     & = e^{6/\pi^2}-1 \\
     & < 1 \;,
    \end{align}
    	and hence
    \begin{align}
    	E_R &< q+2 \\
    	& < \log_2(\cX_{\rm max}/\cX) + \log_2(3/2) + 2 \\
    	& < \log_2(\cX_{\rm max}/\cX) +3\;.
    \end{align}
    The expected running time of Algorithm 3,  which is given by $E_R$, is then $O(\log_2(\cX_{\max}/\cX))$. As defined,
    this algorithm can run indefinitely but the 
    properties of $\Pr(R)$ make it unlikely that we will obtain running times $R$ that are much larger
    than $E_R$.

    \section{Complexity of relative-error estimation}
    \label{app:rel_err_complexity}
    
    We now bound the expected complexity of Algorithm 2 for the PFP. The complexity of $\textsc{EstimatePF}$ is different depending on whether we use Algorithm 1.A or 1.B. 
   Disregarding polylogarithmic factors in $\beta$ and $m$, the complexities of Algorithms 1.A and 1.B are
  \begin{align}
    \label{eq:1Acompapp}
        T_{1A} = \tilde O \Bigg(&\frac{e^{2\beta} M^2}{\eabs^2}\log(1/\delta)L^3C_H\nonumber \\ &\times (\beta^2+m^2)\left(\log\left(1/\eabs\right)\right)^2\Bigg) \;
    \end{align}
    and
    \begin{align}
    \label{eq:1Bcompapp}
        T_{1B}= \tilde O \Bigg(& \frac{ M^2}{\eabs^2} \log(1/\delta){L^3 C_H }\nonumber \\
        &\times m(\beta+m)\left(\log\left( 1/\eabs\right)\right)^2 \Bigg) \;.
    \end{align}
     We may write the cost of running $ \textsc{EstimatePF} $ once at a step $r$ in Algorithm 2 in the general form
     if we replace $\eabs \rightarrow \erel \cZ_{\max}/2^{r+1}$ and $\delta \rightarrow (6/\pi^2) \delta/r^2$. Then,
    \begin{align}
        T_1(r) = \tilde O \left( \frac {\gamma \log\left(1/\delta \right)}{\erel^2 \cZ_{\max}^2} 4^{r}r^2 \log(r) \right) \;,
    \end{align}
    where the $\tilde O$ notation hides polylogarithmic factors in $\erel$. The factor $\gamma$ depends on $\beta$, $m$, $L$ and $C_H$, and is different for Algorithms 1.A and 1.B. 
    
    Then, the overall cost of running $\textsc{EstimatePF}$ $R$ times in Algorithm 2 is
    \begin{align}
       T_2(R) & = \sum_{r=1}^R T_1(r) \\
       & = \tilde O \left(\frac {\gamma\log\left(1/\delta\right)}{\erel^2 \cZ_{\max}^2} \sum_{r=1}^R 4^r r^2 \log(r) \right) \\
       & = \tilde O \left( \frac {\gamma\log\left(1/\delta\right)}{\erel^2 \cZ_{\max}^2}  4^R R^3 \right) \; ,
    \end{align}
     where the $\tilde O$ notation hides a polylogarithmic factor in $\erel$.
    We can compute the expected value of the cost of Algorithm 2 as $T_2 =\sum_{R=1}^\infty \Pr(R) T_2(R)$. We note that
    \begin{align}
    \sum_{R=1}^\infty & \Pr(R) 4^R R^3 \nonumber \\
    &=\sum_{R=1}^{q+1} \Pr(R) 4^R R^3  + \sum_{R=q+2}^{\infty}  \Pr(R) 4^R R^3 \\
    & \le 4^{q+1}(q+1)^3 + \sum_{q'=q+2}^\infty \Pr(R \ge q') 4^{q'} q'^3 \;.
    \end{align}	
    We use Equation~\eqref{eq:probsuperexpdecay} to bound the second term in the above and obtain
    \begin{align}
        \sum_{q'=q+2}^\infty & \Pr(R \ge q')  4^{q'} q'^3  \nonumber \\ 
        &\leq \sum_{q'=q+2}^\infty \left( \tfrac 6 {\pi^2} \right)^{q'-q-1} \frac {4^{q'} q'^3} {(q'-q-1)!} \\
        & = 4^{q+1}\sum_{q''=0}^\infty \left(\tfrac{24}{\pi^2 }\right)^{q''} \frac {(q''+q+1)^3} {q''!}\\
        & = O(4^q q^3)\;.
    \end{align}
    Therefore, the expected complexity of Algorithm 2 can be written as
    \begin{align}
        T_2 &= \tilde O \left( \frac {\gamma\log\left(1/\delta\right)}{\erel^2 \cZ_{\max}^2} 4^q q^3 \right) \nonumber \\
        &= \tilde O \left(\frac {\gamma\log\left(1/\delta\right)}{\erel^2 \cZ^2} \left(\log\left(\cZ_{\max}/\cZ\right)\right)^3 \right)\; .
    \end{align}
    For Algorithm 1.A, we have $\gamma = e^{2\beta} M^2 {L^3 C_H }(\beta^2+m^2) $ and $\cZ_{\max}=Me^\beta$, giving
    \begin{align}\label{eq:complexity2A_det}
        T_{2A} =\tilde O \Bigg(&\left(\frac{Me^{\beta}}{\erel \cZ}\right)^2 {L^3 C_H }\log\left(1/\delta\right)\nonumber \\ &\times(\beta^2+m^2)\left(\log\left(Me^\beta/\cZ\right)\right)^3 \Bigg)\;.
    \end{align}
    For Algorithm 1.B, we have $\gamma = M^2 {L^3 C_H }m(\beta+m) $ and $\cZ_{\max}=M$, giving
    \begin{align}\label{eq:complexity2B_det}
        T_{2B} = \tilde O \Bigg(&\left(\frac{M}{\erel \cZ}\right)^2 {L^3 C_H }\log\left(1/\delta\right) \nonumber\\ &\times m(\beta+m)\left(\log\left(M/\cZ\right)\right)^3 \Bigg)\;.
    \end{align}
  The $\tilde O$ notation in $T_{2A}$ and $T_{2B}$ hides polylogarithmic factors in $\beta$, $m$, and $\erel$.

\bibliographystyle{ieeetr}
\bibliography{PartitionFunctionDQC1}

\end{document}